\DeclareMathOperator{\argmax}{argmax}
\newtheorem{theorem}{Theorem}
\newtheorem{lemma}{Lemma}
\newtheorem{proposition}{Proposition}
\newtheorem{definition}{Definition}
\newtheorem{remark}{Remark}
\newtheorem{example}{Example}
\newcommandx{\OlegComment}[2][1=]{\todo[linecolor=cyan,backgroundcolor=cyan!25,#1]{#2}}
\begin{document}
\onehalfspace
\begin{frontmatter}




\title{On Greedy and Strategic Evaders in Sequential Interdiction Settings\\ with Incomplete Information}


\author[label1]{Sergey S.~Ketkov}
\author[label1,label2]{Oleg A.~Prokopyev\footnote[2]{Corresponding author. Email: droleg@pitt.edu; phone: +1-412-624-9833.}}
\address[label1]{Laboratory of Algorithms and Technologies for Networks Analysis, National Research University \\Higher School of Economics (HSE), Bolshaya Pecherskaya st., 25/12, Nizhny Novgorod, 603155, Russia}
\address[label2]{Industrial Engineering, University of Pittsburgh, 3700 O'Hara Street, 1048 Benedum Hall, Pittsburgh, PA 15261, USA}
\vspace{-2mm}

\begin{abstract}

\looseness-1 We consider a class of sequential network interdiction problem settings where the interdictor has incomplete initial information about the network while the evader has complete knowledge of the network including its structure and arc costs. In each decision epoch, the interdictor can block (for the duration of the epoch) at most $k$ arcs known to him/her. By observing the evader's actions, the interdictor learns about the network structure and costs and thus, can adjust his/her actions in subsequent decision epochs. It is known from the literature that if the evader is greedy (i.e., the shortest available path is used in each decision epoch), then under some assumptions 
the greedy interdiction policies that block $k$-most vital arcs in each epoch are efficient and have a finite regret. In this paper, we consider the evader's perspective and explore deterministic ``strategic'' evasion policies under the assumption that the interdictor is greedy. We first study the theoretical computational complexity of the evader's problem. Then we derive basic constructive properties of optimal evasion policies for two decision epochs when the interdictor has no initial information about the network structure. These properties are then exploited for the design of a heuristic algorithm for a strategic evader in a general setting with an arbitrary time horizon and any initial information available to the interdictor. Our computational experiments demonstrate that the proposed heuristic outperforms the greedy evasion policy on several classes of synthetic network instances under either perfect or noisy information feedback. Finally, some interesting insights from our theoretical and computational results conclude the paper. 
\end{abstract}

\begin{keyword}
network interdiction; incomplete information; shortest path; $k$-most vital arcs; strategic evader; arc-disjoint path problem
\end{keyword}

\end{frontmatter}

\section{Introduction}\label{sec:intro}

\looseness-1 Network interdiction is typically posed as a one- or multi-stage decision-making problem in a network with two decision-makers, an \emph{interdictor} and an \emph{evader}. The evader traverses in the network (e.g., between two fixed nodes, a source and a destination), while the interdictor aims to disrupt to the maximum possible extent (or completely stop) the evader's movement. Perhaps, the most well-known and studied network interdiction problem is the \emph{shortest path interdiction problem}~\cite{5}, where the interdictor seeks a set of arcs whose removal subject to some budgetary constraint, maximizes the cost of the shortest path between two specified nodes. That is, the evader is assumed to move along the shortest path between these nodes. If the interdictor is allowed to block at most $k$ arcs, then this problem is often referred to as the $k$-\emph{most vital arcs problem} \cite{11}. The shortest path interdiction problem is known to be $NP$-hard~\cite{11}. Nevertheless, rather effective exact solution approaches for various versions of the network interdiction problem exist in the literature, see, e.g., \cite{5,sullivan2014exact}.

\looseness-1Network interdiction forms a broad class of deterministic and stochastic optimization problems with applications mostly arising in the military, law-enforcement and infectious disease control contexts, see surveys in~\cite{Dimitrov2013,smith2008algorithms,smith2013modern,Smith2019,wood2011bilevel} and the references therein. Perhaps, the earliest example of the network interdiction considerations can be found in a now de-classified RAND report for the U.S. Air Force in 1955~\cite{harris1955fundamentals}, which studied the Soviet railway network. There has been a substantial increase in the interest for this class of optimization problems since the early 2000's given concerns with various homeland security issues. Recent examples of real-world settings, where different types of interdiction models have been applied include nuclear smuggling interdiction~\cite{allain2016evolving,morton2007models}, drug trafficking enforcement~\cite{enayaty2018logic,malaviya2012multi}, border patrol~\cite{brown2006defending}, the problem of interdicting a nuclear-weapon project \cite{brown2009interdicting}, etc. Furthermore, interdiction models have been applied for analysis of critical infrastructure systems, see examples of real-life case studies in~\cite{brown2006defending} for the U.S. strategic petroleum reserve and an electrical transmission grid. A brief overview of the network interdiction settings in the modern military contexts can also be found in~\cite{allain2016evolving}.

In the typical shortest path interdiction problem the graph represents an underlying ``infiltration'' or ``smuggling'' network (with a finite number of possible ``infiltration'' or ``smuggling'' routes, respectively) in which the evader travels or moves the illegal materials. The arc costs may correspond to appropriately defined detection (or non-detection) probabilities or the amount of the evader's effort required to smuggle/move a unit of illegal materials through an arc, while graph nodes often represent some geographical locations. The interdiction actions correspond to placing sensors, road/helicopter patrols or actual attacks (in the military contexts) to prevent successful evasions, see discussions in~\cite{allain2016evolving,smith2008algorithms, smith2013modern, wood2011bilevel}. Finally, we also refer the reader to~\cite{brown2006defending,Dimitrov2013}, where the authors provide interesting discussions on some practical implications and insights from various interdiction models in the literature.

While most of the studies in the network interdiction literature consider deterministic settings, a number of more recent works consider the network interdiction problem in stochastic settings;  see, e.g., \cite{3, Pan2008, Janjarassuk2008} and the survey in \cite{Nehme2009}. Typically such models assume that either the outcomes of interdiction actions are uncertain or there is uncertainty with respect to the evader's actions. Then the interdictor's objective is to optimize some utility function over the network, e.g., minimize the expected cost of the maximum-reliability path \cite{Pan2008}.

In addition, several studies consider multi-stage stochastic interdiction models. One example is the study by Held~et~al.~\cite{Held2005}, who assumes that the network's configuration itself is subject to uncertainty. The interdictor attempts to maximize the probability that the minimum path cost exceeds a given threshold. Since even computing an objective function value is time-consuming some heuristic algorithms are proposed and their effectiveness is validated numerically.

Furthermore, an interesting dynamic deterministic version of the network interdiction problem has been recently considered in \cite{sefair2016dynamic}, where the evader can dynamically adjust her\footnote{Note that in the remainder of the paper we refer to the interdictor and the evader as ``he/his'' and ``she/her,'' respectively.} movement at every node of her path in the network by observing the interdictor's actions. The interdictor, in turn, can interdict arcs any time the evader arrives at a node in the network. It is also assumed that the interdictor has a limited interdiction budget.

\looseness-1The current study is motivated and builds upon recent works of Borrero et al.~in \cite{1,2}. Specifically, {in their network model in~\cite{1}} the interdictor and the evader interact sequentially over multiple {decision epochs (or rounds)\footnote{We use the terms ``decision epoch'' or ``round'' interchangeably}}. In each {epoch}, the interdictor can block at most $k$ arcs for the duration of the current decision {epoch}, while the evader is assumed to be greedy, i.e., in each {epoch} the evader traverses along the shortest path between two fixed nodes in the interdicted network.
The evader's loss in each epoch (i.e., her instantaneous loss) is equal to the cost of the shortest path in the interdicted network. This modeling approach can be justified, for example, by interdiction and evasion dynamics arising in monitoring and patrolling problems, where the interdictor has to periodically reallocate his resources over different geographical locations, see additional discussion in~\cite{1}.

\looseness-1 The key feature of the model in \cite{1} is that the interdictor has incomplete initial information about the network including its structure and costs, but learns about the network structure and arc costs by observing the evader's actions (i.e., the evasion path) in each decision epoch. The learning component is motivated by practical settings, where the interdictor can observe the evader's actions (for example, by using a satellite or a drone), but cannot immediately react upon those actions; see, e.g., \cite{Zheng2012}. In particular, it is assumed that this information feedback is deterministic and perfect that is, the interdictor learns about the existence and the exact costs of the arcs used by the evader in the previous decision epochs. The quality of an interdiction policy is measured using either cumulative regret or time stability. The former is defined as the difference in the total cost (over some predefined finite number of decision epochs) incurred by the evader under the current interdiction policy against the policy of an oracle interdictor with prior complete knowledge of the network. Clearly, the oracle interdictor implements an optimal solution of the $k$-most vital arcs problem in each decision epoch. Time stability is defined as the number of rounds that are necessary for the policy to gain sufficient amount of information in order to implement a solution of the $k$-most vital arcs problem that is also optimal in the full information network for the remainder of the time horizon. 

\looseness-1 The main results of \cite{1} can be summarized as follows. First, it is shown that for their deterministic setting, in general, there do not exist policies that perform better than any other policy for any graph consistent with the initial information available to the interdictor. Thus, the focus of \cite{1} is on the \emph{greedy and robust interdiction policies}. They are greedy because they block a set of the $k$-most vital arcs from the network known to the interdictor in each round; they are robust because whenever the exact cost of the arc is not known to the interdictor, then the policies assume the worst-case scenario for the evader. These policies turn out to be ``efficient'' in the following sense: (\emph{i}) they eventually find and maintain an optimal solution to the $k$-most vital arcs problem in the full information network (i.e., an optimal solution of the oracle interdictor), within a finite number of decision epochs (possibly, instance dependent); and (\emph{ii}) this class of policies is \emph{not dominated} that is, for any possible instance of the initial information available to the interdictor and any policy that is not greedy and robust, there exists a greedy and robust policy that is strictly better (with respect to either the cumulative regret or time stability) than the aforementioned non-greedy and/or non-robust policy for some graph that is consistent with the initial information available to the interdictor.

Property (\emph{i}) of greedy and robust policies also implies that they have a finite regret. Furthermore, these policies detect when the instantaneous regret reaches zero in real time, i.e., when an optimal solution of the oracle interdictor is found. In addition to these attractive theoretical properties, the results of computational experiments~in \cite{1} also confirm the superiority of greedy and robust interdiction policies against several other benchmark policies. Finally, in \cite{2} the authors generalize the theoretical results and greedy policies from~\cite{1} for a more general class of max-min linear mixed-integer problems.

Note that in \cite{1,2}, similar to the vast majority of the related interdiction literature, the authors focus on the interdictor's perspective. However, given that the outlined greedy and robust interdiction policies are {rather intuitive} and very simple to implement (which is important for their applicability in practical applications) and have attractive theoretical properties, it is natural to explore the evader's perspective. In particular, if we assume that the interdictor is greedy and robust, what are good strategic policies for the evader against greedy and robust interdiction policies? Do such policies have any interesting structural properties? How can they be constructed? These research questions form the main focus of the current study.

Admittedly, our modeling approach is somewhat stylized and clearly not without limitations. However, our goal is to gain insights about the structure and properties of the evasion policies that are effective against greedy and robust interdictors. We envision that the results of the study can be further exploited for the development of more advanced interdiction models and policies, where the evader follows more sophisticated behavior than that typically assumed in the standard interdiction models.

\looseness-1 In this paper, we consider the deterministic setting similar to \cite{1} with an additional modification. Specifically, with respect to the initial information available to the interdictor, in \cite{1} it is assumed that for some arcs the interdictor is aware only about some upper and lower bounds on their costs. In contrast, we assume that whenever the existence of the arc is known to the interdictor initially, then its cost is also known, which in fact, is a less favorable scenario for the evader, whose perspective we consider. Furthermore, recall that in the considered setting the feedback is perfect that is, whenever the arcs is used by the evader, its existence and precise cost information is revealed to the interdictor. Therefore, under our rather mild assumption on the initial information available to the interdictor, the greedy and robust interdiction policies of \cite{1,2} reduce to simply the greedy ones.

 As mentioned earlier, the evader's perspective is discussed by relatively few studies. In particular, Sanjab~et~al. \cite{Sanjab2017, Sanjab2019} study a one-stage zero-sum game, where the evader transports products between two specified nodes in the network and attempts to minimize the expected delivery time. Furthermore, the authors focus on a bounded rationality model, that is, the risk level of a selected path or the merit of a selected attack location can be perceived subjectively by the decision-makers. However, in our setting the bounded rationality of the evader is addressed by considering strategic decisions in a multi-period deterministic setting. Additionally, in contrast to \cite{Sanjab2017, Sanjab2019} we assume limited information about the network's structure, but not uncertainty in the evader's (or interdictor's) decisions.

Eventually, our formulation of the evader's problem can be viewed as a particular case of online combinatorial optimization problem with sleeping experts \cite{14}, which is a generalization of the classical multi-armed bandit formulation; see, e.g., \cite{16}. The problem with stochastic loss functions and adversarial availability of actions is discussed in \cite{20, 15}. The evader's problem is related to the online adversarial shortest path problem \cite{20}. Nevertheless, uniform mixing assumption \cite{20} does not hold for deterministic strategies and the notion of regret (for example, per action regret \cite{14, 17}) compares an arbitrary policy with a policy of using the best ranked action in the hindsight, i.e., with the greedy evasion policy. A distinctive feature of our setting is that the evader's actions determine the information collected by the interdictor and, thus, influence the structure of the setting.

In view of the discussion above, the contribution of this paper (and its remaining structure) can be summarized as follows:
\begin{itemize}
 \item In Section~\ref{sec:mathmod}, we formulate the  repeated  evasion model with multiple decision epochs under the assumption that the interdictor follows a greedy interdiction policy. In the deterministic setting with perfect information feedback, this model can be viewed as a particular class of hierarchical deterministic combinatorial optimization problems, which is the main focus of this study.
 \item \looseness-1 In Section~\ref{sec:NP}, we explore the theoretical computational complexity of the considered evader's problem. We show that the evader's problem is $NP$-hard in the case of two decision epochs as long as there are no restrictions for the initial information available to the interdictor. This result is established for networks where the interdiction problem is polynomially solvable.
 \item In Section~\ref{sec:theory}, we provide theoretical analysis of evasion policies in the setting with two decision epochs, where the interdictor has no initial information about the network arcs. We show that under some mild assumption, the optimal evasion decision is either greedy, or consists of two distinct paths that intersect (i.e., have some arcs in common) with the overall shortest path in the network.
 \item In Section~\ref{sec:algorithm}, we exploit these theoretical properties to develop a heuristic algorithm for the strategic evader in a more general setting with an arbitrary time horizon and no restrictions on the initial information available to the interdictor.

 \item In Section~\ref{sec:experiment}, we perform computational experiments that demonstrate that the proposed heuristic consistently outperforms the greedy evasion policy on several classes of synthetic network instances. In our experiments, in addition to perfect information feedback we also consider feedback scenarios, where the information obtained by the interdictor from the evader's actions is noisy.
\end{itemize}

\looseness-1 Finally, Section~\ref{sec:conclusions} concludes the paper, {highlights interesting insights from our theoretical results and computational observations}, and then outlines promising directions for future research.

\noindent \textbf{Notation.} In the remainder of the paper we use the following notation. Let $G = (N, A)$ be a connected weighted directed graph, where $N$ and $A$ denote its sets of nodes and directed arcs, respectively. Denote by $c_a$ a nonnegative arc cost associated with each arc $a \in A$. For $A' \subseteq A$ we define a subgraph of $G$ induced by this subset of arcs as $G[A'] = (N, A')$. Denote by $T\in\mathbb{Z}_{>0}$ the number of {decision epochs (or rounds)} and by $k \in \mathbb{Z}_{>0}$ the interdictor's budget.

Furthermore, we define two particular nodes in $G$, which are referred to as the source, $s$, and destination, $f$, nodes, respectively. Let $\mathcal{P}_{sf}(G)$ be a set of all simple directed paths from $s$ to $f$ in network~$G$. Any path $P \in \mathcal{P}_{sf}(G)$ is given by a sequence of arcs $(s,v_1),(v_1,v_2),\ldots,(v_{|P|-1},f)$, which we denote by $\{s \rightarrow v_1 \rightarrow \ldots \rightarrow v_{|P|-1} \rightarrow f\}$ for convenience. Also, let $\ell(P)$ be the cost of path $P \in \mathcal{P}_{sf}(G)$, that is $\ell(P)= \sum_{a \in P} c_a.$ Finally, define $$z(G)=\min_{P\in \mathcal{P}_{sf}(G)}\ell(P),$$ \noindent i.e., $z(G)$ is the cost of the shortest path from $s$ to $f$ in $G$.

\section{Mathematical Model}\label{sec:mathmod}

We consider a sequential decision-making process, where in each decision epoch (round) $t\in \{1,2,\ldots,T\}$ an evader and an interdictor interact. The evader has full information about the underlying network, while the interdictor has limited information about its structure and costs. In particular, we assume that the interdictor initially observes a subnetwork $G[A_0]$ of the given network $G = (N, A)$, i.e., he is informed only about the existence of arcs in $A_0$ along with their costs~$\{c_a\}_{a \in A_0}$. Let
\begin{equation}\nonumber
\mathcal{C}_0=(N,A_0),
\end{equation}
where we refer to $\mathcal{C}_0$ as the initial information available to the interdictor as it contains his initial knowledge about the structure and costs of the network.


In each decision epoch $t\in \{1,2,\ldots,T\}$ the following sequence of events takes place:
\begin{enumerate}
\item
The interdictor chooses set $I_{t} \subseteq A_{t-1}$ of at most $k$ arcs to be blocked for the time of exactly one {decision epoch}.

\item
The evader traverses along path $P_t \in \mathcal{P}_{sf}(G[A \setminus I_{t}])$. We refer to $\ell(P_t)$ as the \emph{evader's instantaneous loss}. The evader also reveals the arcs in $P_t$ and their costs to the interdictor.

\item The interdictor updates the information available to him, i.e., $A_t = A_{t-1} \cup P_t$.
\end{enumerate}

We assume that the evader attempts to minimize her cumulative loss over $T$ rounds, while the interdictor is restricted to act greedily in each decision epoch. In addition, we make the following assumptions:
\vspace{2mm}

\textbf{A1.} In each round the interdictor acts first. Furthermore, the interdictor is \emph{greedy} in the sense that he always blocks a set of $k$-most vital arcs in the observed network, i.e.,
\begin{equation}\label{eq:greedy-interdictor}I_{t} \in \argmax \{z(G[A_{t-1} \setminus I]): \ I \subseteq A_{t-1}, |I|\leq k\}.\end{equation}

\looseness-1\textbf{A2.} Graph $G$ is not trivially $k$-separable that is, any subset of $k$ arcs in $G$ is not an $s$-$f$ cut.

\looseness-1\textbf{A3.} If there is more than one possible choice for $I_t$, then the interdictor blocks arcs following a well-defined deterministic rule, which is \emph{consistent} in the sense that if $I_t$ is chosen from a collection of blocking solutions $ {\cal I}$, then it is also chosen from any collection of solutions $\widetilde{\cal I}\subseteq {\cal I}$ containing $I_t$.


\textbf{A4.} The evader has full information about the graph's structure, costs  and the  interdictor's budget, $k$.  The evader observes the interdictor's actions before choosing a path and cannot use interdicted arcs.

\textbf{A5.} The interdictor is initially given information only about subnetwork $G[A_0]$. Each round he observes path $P_t$ and cost $c_a$ of each arc $a \in P_t$ used by the evader. 
\vspace{2mm}

The first part of Assumption \textbf{A1} is technical. The second part of \textbf{A1} can be motivated by the interdictor's incomplete initial information about the network's structure and costs, which is reasonable from the application perspective. More importantly, as we discuss explicitly in Section~\ref{sec:intro}, the studies in \cite{1,2} establish a number of attractive and practically relevant features of greedy interdiction policies both in the shortest path interdiction and general max-min linear mixed-integer programming settings.

Assumption \textbf{A2} is technical as it ensures that the evader's problem is feasible at each round. \looseness-1 Assumption \textbf{A3} implies that the interdictor's policies are deterministic. The \emph{consistency} assumption mimics an analogous assumption in \cite{1} for the evader's policies. For example, one can think that in each decision epoch the interdictor ranks all feasible blocking solutions in the observed network based on some criteria, e.g., their costs to the evader, resolving ties according to any deterministic criteria. Then the interdictor selects the highest-rank blocking solution from such a list.

\looseness-1 Assumption \textbf{A4} implies that the evader has some degree of monitoring of the interdictor's actions. The second part of this assumption is a standard conjecture in the interdiction, bilevel, and general hierarchical optimization literature; see, e.g., \cite{smith2008algorithms,smith2013modern,wood2011bilevel,sefair2016dynamic}.


The first part of Assumption \textbf{A5} formalizes the notion of some initial knowledge of the interdictor about the underlying network. The second part of Assumption \textbf{A5} represents the case of the perfect (or transparent) feedback (from the evader to the interdictor) similar to~\cite{1,2} that are the basis for this study (recall our discussion in Section~\ref{sec:intro}). We exploit this assumption in derivations of our theoretical results. However, we relax this assumption in our computational study in Section~\ref{sec:experiment}. Finally, recall that the interdictor has full information about the costs of arcs in $A_0$. Thus, \textbf{A5} implies that whenever existence of the arc is known to the interdictor at any decision epoch, then he is also aware of its cost.


In view of the discussion above, the evader's problem can be formulated as the following  repeated  hierarchical combinatorial optimization problem:
\allowdisplaybreaks\begin{subequations}\label{eq1}
\begin{align}
 \min_{P_t}\sum\limits_{t = 1}^{T}& \ell(P_t):=\sum\limits_{t = 1}^{T}\sum_{a\in P_t}c_a \label{eq1a}\\
\mbox{\upshape s.t. } & P_t \in \mathcal{P}_{sf}(G[A \setminus I_{t}]) \quad \forall t\in \{1,\ldots,T\}, \label{eq1b}\\
& I_{t} \in \argmax \{z(G[A_{t-1} \setminus I]): \ I \subseteq A_{t-1}, |I|\leq k\} \quad \forall t\in \{1,\ldots,T\},\label{eq1c}\\
& A_t = A_{t-1}\cup P_t \quad \forall t\in \{1,\ldots,T\}, \label{eq1d}
\end{align}
\end{subequations}
\looseness-1\noindent where the evader's objective function in (\ref{eq1a}) represents the sum of the evader's instantaneous losses of $T$ decision epochs. Condition (\ref{eq1b}) ensures that $P_t$ does not include arcs, which are blocked by the interdictor at round $t$. Constraint (\ref{eq1c}) requires $I_{t}$ to be a set of $k$-most vital arcs in $G[A_{t-1}]$ (recall assumption \textbf{A1}), i.e., an optimal solution of the interdiction problem in each decision epoch, and thus, leads to the hierarchical decision-making structure of the overall sequential problem. Then (\ref{eq1d}) states that a set of arcs known to the interdictor at round $t$ is updated according to assumption \textbf{A5}.

From the game theoretic perspective, the evader's problem (\ref{eq1}) can be viewed as a finitely repeated Stackelberg game with incomplete information; see, e.g., \cite{Sinha2018} and the references therein. Specifically, in each decision epoch $t \in \{1,\ldots, T - 1\}$ the evasion-blocking pair $(P_t, I_{t + 1})$ forms a Stackelberg equilibria in the currently observed network $G[A_t]$.  Next, we provide formal definitions of evasion and interdiction policies along with some examples.
\begin{definition} \label{def: evasion policy}
\upshape
An \textit{evasion policy} is a deterministic sequence of set functions $(\pi_t)|_{t\in \{1,\ldots,T\}}$ such that for each $t\geq 1$, $P^\pi_t = \pi_t(F^\pi_t)$, $P^\pi_t \in \mathcal{P}_{sf}(G[A \setminus I_{t}])$, where $F^\pi_t$ summarizes the initial information as well as the history of the interdiction and evasion decisions up to round $t$:
$$F^\pi_t = (\mathcal{C}_0, I_1, P^\pi_1, \ldots, I_{t-1}, P^\pi_{t-1}, I_{t}).$$
\vspace{-20mm}\flushright$\square$
\end{definition}
 Note that in Definition \ref{def: evasion policy} an evasion policy in round $t$ accounts for the sequence of previous blocking and evasion decisions. This assumption is necessary since the evader attempts to minimize her cumulative loss over $T$ rounds and thus, forms an overall sequence of evasion decisions $P^\pi_t$, $t \in \{1, \ldots, T\}$.

\begin{definition}
\upshape
The evader and her policy are referred to as \textit{greedy} if she chooses the shortest path in the interdicted network in each round. 
The evader and her policy are called \textit{strategic} if she solves (heuristically or exactly) sequential hierarchical optimization problem (\ref{eq1}).\vspace{-9mm}\flushright $\square$
\end{definition}

\looseness-1{Simply speaking, the greedy evader corresponds to a decision-maker who is myopic and optimizes only her instantaneous loss in each decision epoch. Alternatively, the upper-level decision-maker in (\ref{eq1}) can be viewed as a team of evaders, one for each decision epoch. Then the greedy policy corresponds to a scenario where the team is decentralized and each evader optimizes her own objective function, while the objective function in (\ref{eq1a}) represents the total team's loss.}

We denote by $P_t^{SP}$ and $P_t^{SE}$, $t\geq 1$, the evasion decisions by the greedy and strategic evaders, respectively. Given $G$ and $\mathcal{C}_0$, define the \emph{cumulative loss} of the evader under evasion policy $\pi$ over $T$ rounds as follows:
$$L_T^\pi(G,\mathcal{C}_0):=\sum_{t=1}^T \ell(P^\pi_t)$$
Let $\Pi(G, \mathcal{C}_0)$ be a class of all feasible evasion policies for $G$ and $\mathcal{C}_0$. Thus, for some fixed values of $k$ and $T$ we say that evasion policy $\pi \in \Pi(G,\mathcal{C}_0)$ \textit{strongly dominates} policy $\pi' \in \Pi(G,\mathcal{C}_0)$ if $L^{\pi}_T(G,\mathcal{C}_0) < L^{\pi'}_T(G,\mathcal{C}_0)$. 


\begin{definition} \label{def: greedy semi-oracle}
\upshape
\looseness-1 The interdictor is referred to as a \textit{greedy semi-oracle}, if for any $t \in \{1, \ldots, T\}$ the following conditions hold:
\begin{enumerate}
\item[(\textit{i})] The interdictor has complete knowledge of the evader's policy $\pi$.

\item[(\textit{ii})] If the $k$-most vital arcs problem in $G[A_{t-1}]$, i.e., problem (\ref{eq1c}), has multiple optimal solutions, then the interdictor selects the one that is least favorable for the evader under policy $\pi$ in the current decision epoch $t$.
\item[(\textit{iii})] Furthermore, among these solutions the one with maximum cardinality is preferred. \vspace{-11.5mm}\flushright $\square$

\end{enumerate}
\end{definition}

In other words, the greedy semi-oracle chooses a set of the $k$-most vital arcs $I_t \subseteq A_{t-1}$ in $G[A_{t-1}]$ so as to maximize the evader's instantaneous loss $\ell(P^\pi_t)$. We also provide a technical requirement that the interdictor maximizes $|I_t|$ as his auxiliary objective;  see, e.g., Example \ref{Example 1}.


Simply speaking we assume that a greedy semi-oracle has full information about the network's structure and costs, but is restricted to act greedily by blocking a set of the $k$-most vital arcs in the currently observed network $G[A_{t - 1}]$. However, given his knowledge of $G$ he can anticipate the evader's actions in the current round and thus, select solutions that are preferable for him and least favorable to the evader.

For example, assume that the policy $\pi$ in Definition \ref{def: greedy semi-oracle} is a greedy evasion policy and $T = 1$. Then our definition of the greedy semi-oracle can be viewed as a pessimistic version of the evader's problem (\ref{eq1}). More precisely, if the lower-level decision-maker (the interdictor) has multiple optimal solutions, then he prefers the one that is least favorable to the upper-level decision-maker (the evader); see~\cite{Colson2007} for further details on bilevel optimization.

Next, we provide two illustrative examples comparing the greedy evader against a strategic one. These examples provide us with further motivation of exploring the structural properties of strategic evasion policies discussed in Section~\ref{sec:theory}. In the examples we assume that the interdictor is a greedy semi-oracle.

\begin{example} \label{Example 1}
\upshape
Graph $G$ used in this example, is provided in Figure \ref{fig: Example 1}. Let $s = 1$ and $f = 4$ be the evader's source and destination nodes, respectively, and $M$ be a real number such that~$M > 5$. We also set $T = 2$, $k = 2$ and $A_0 = \emptyset$.

\looseness-1First, assume that the evader is greedy. Since $A_0 = \emptyset$ we have $I_1 = \emptyset$ and in the first round the greedy evader follows the overall shortest path given by $P^{SP}_1 =\{ 1 \rightarrow 2 \rightarrow 3 \rightarrow 4\}$. {Note that $A_1 =P^{SP}_1$ and thus, any subset of arcs of $P^{SP}_1$ is also an optimal solution of the $k$-most vital arcs problem in $G[A_1]$.}

 Next, recall that the interdictor is a greedy semi-oracle; see Definition \ref{def: greedy semi-oracle}. Hence, he knows that the evasion policy $\pi$ is greedy and attempts to maximize the evader's instantaneous loss $\ell(P^{\pi}_2) = z(G[A \setminus I_2])$ in the second decision epoch.  We conclude that the interdictor blocks arcs $(1, 2)$ and $(3,4)$, which implies that $I_2 = \{(1, 2), (3, 4)\}$ and $P^{SP}_2 = \{1 \rightarrow 5 \rightarrow 4\}$. Observe that $|I_2| = 2 = k$ and, therefore, the blocking solution with maximal possible cardinality is selected.  As a result, the cumulative loss of the greedy evader is given by $L^{SP}_2 = 3 + M$.

Then consider a strategic evader, who sequentially traverses through arc-disjoint paths $P^{SE}_1 =\{1 \rightarrow 2 \rightarrow 4\}$ and $P^{SE}_2 = \{1 \rightarrow 3 \rightarrow 4\}$. In this case $I_2 = \{(1, 2), (2,4)\}$ and the cumulative loss of the strategic evader is given by $L^{SE}_2 = 4 + 4 = 8 < L^{SP}_2$. \vspace{-10mm}\flushright$\square$\end{example}

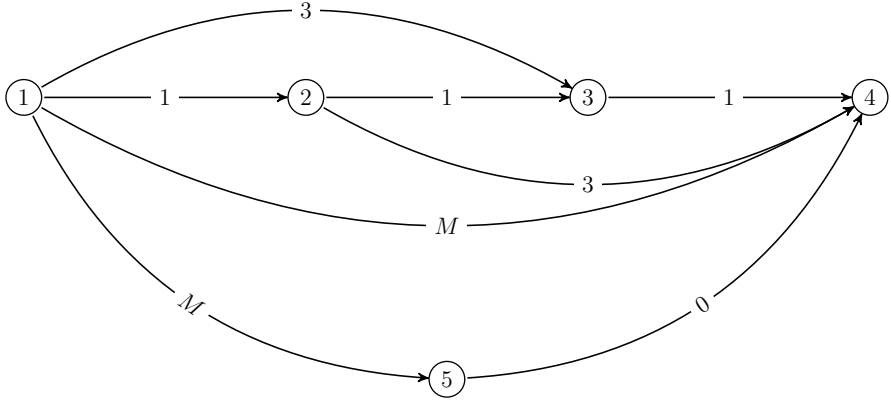
\begin{figure}
\centering
\begin{tikzpicture}[scale=0.75,transform shape]
\Vertex[x=0,y=0]{1}
\Vertex[x=5,y=0]{2}
\Vertex[x=10,y=0]{3}
\Vertex[x=15,y=0]{4}
\Vertex[x=7.5,y=-5]{5}
\tikzstyle{LabelStyle}=[fill=white,sloped]
\tikzstyle{EdgeStyle}=[post]
\Edge[label=$1$](1)(2)
\Edge[label=$1$](2)(3)
\Edge[label=$1$](3)(4)
\tikzstyle{EdgeStyle}=[post, bend left]
\Edge[label=$3$](1)(3)
\tikzstyle{EdgeStyle}=[post, bend right]
\Edge[label=$M$](1)(4)
\Edge[label=$3$](2)(4)
\Edge[label=$M$](1)(5)
\Edge[label=$0$](5)(4)
\end{tikzpicture}
\caption{\footnotesize {The network used in Examples \ref{Example 1} and \ref{Example 2}. For the costs of arcs $(1,4)$ and $(1,5)$ we assume that $M > 5$.}}
\label{fig: Example 1}
\end{figure}%

Example~\ref{Example 1} illustrates that if the evader is aware that the interdictor is greedy, then she can exploit this fact to decrease her cumulative loss. Furthermore, observe that the paths used by the strategic evader have some arcs in common with the shortest path. In Section~\ref{sec:theory} we formally establish that this observation is, in fact, a necessary condition for the strategic evasion policy whenever it outperforms the greedy one under the assumption that $T = 2$, $k \geq 1$ and $A_0 = \emptyset$.



In Example~\ref{Example 2} provided below we demonstrate that the greedy evasion policy can be dominated by a strategic policy for arbitrarily large $T \geq 2$, while $A_0$ does not necessarily need to be empty.


\begin{example} \label{Example 2}
\upshape
\looseness-1 As in the previous example consider the graph depicted in Figure~\ref{fig: Example 1}. In contrast to Example \ref{Example 1}, we change $A_0$ in a particular way and assume that $T$ is arbitrarily large, i.e., $T\geq 2$. More precisely, let $A_0$ be non-empty and given by $A_0 = \{(1, 3), (2, 4),(1,4), (1,5), (5,4)\}$.

First, we have $I_1 = \{(1,4), (1,5)\}$ regardless of the evasion policy as the interdictor acts first. Then the greedy evader traverses along path $P^{SP}_1 =\{ 1 \rightarrow 2 \rightarrow 3 \rightarrow 4\}$. Next, at $t=2$ the interdictor blocks $I_2 = \{(1, 2), (3, 4)\}$, which is a set of the $k$-most vital arcs in $G[A_1]$ and the evader follows $P^{SP}_2 =\{ 1 \rightarrow 5 \rightarrow 4\}$. Furthermore, observe that $P^{SP}_t = P^{SP}_2$, $I_t = I_2$ for all~$t \geq~3$. Hence, the cumulative loss of the greedy evader is given by $L^{SP}_T = 3 + M(T - 1)$ for any $T\geq 2$.

For a strategic evader, assume that she traverses through $P^{SE}_1 =\{ 1 \rightarrow 3 \rightarrow 4\}$ at $t=1$. Then $I_2 = \{(1, 4), (3, 4)\}$ and $P^{SE}_2 =\{ 1 \rightarrow 2 \rightarrow 4\}$. Consequently, $I_t = \{(1, 2), (3, 4)\}$ and $P^{SE}_t = P^{SP}_2$ for all $t \geq 3$. It implies that the cumulative loss of the strategic evader is $L^{SE}_T = 8 + M(T - 2)$. Therefore, for arbitrary $T\geq 2$ we have:
$$L^{SP}_T - L^{SE}_T = M - 5 > 0,$$
 and thus, the greedy evasion policy is suboptimal for arbitrarily large values of parameter $T$. $\square$
\end{example}

\section{Computational Complexity}\label{sec:NP}

Observe that the evader's problem in the case of $T = 1$ can be solved efficiently whenever the $k$-most vital arcs problem in $G[A_0]$ admits a polynomial time algorithm. Alternatively, the evader may have access to an \textit{interdiction oracle} that provides optimal blocking decisions in the network currently observed by the interdictor. If such interdiction oracle exists, then for $A_0 \neq \emptyset$ and $T = 1$ the evader's problem coincides with the shortest path problem in the interdicted network $G[A \setminus I_1]$, which is known to be polynomially solvable \cite{Ahuja}.

However, in general the $k$-most vital arcs problem is known to be $NP$-hard \cite{5} and thus, checking feasibility of an evasion solution is $NP$-hard. A more challenging problem is to determine whether the evader's problem is computationally difficult, when the interdiction problem is ``easy" to solve. In this section we show that the evader's problem is $NP$-hard in the case of $T = 2$ even for network instances where the interdiction problem can be solved in polynomial time. Henceforth, we assume that the interdictor is a greedy semi-oracle.

\looseness-1

%


In our complexity reduction below we assume that arc set $A$ consists of two disjoint subsets, namely, the arcs that are either \textit{removable} or \textit{unremovable} by the interdictor, respectively. The notion of unremovable arcs is technical and made without loss of generality as we can make all arcs removable by a polynomial time modification of the original graph. Specifically, we can simply replace each unremovable arc $(i,j)$ by $k + 1$ parallel arcs of equal costs. This construction guarantees that after the removal of at most $k$ arcs, at least one of these arcs remains intact and thus, nodes $i$ and $j$ remain connected by a directed arc. Alternatively (e.g., if parallel arcs are not allowed), we can replace each unremovable arc by $k+1$ paths of length two by adding $k+1$ and $2(k+1)$ of new nodes and arcs, respectively, which results in the same outcome.

Next, we define the classical \textbf{3-SAT} problem, which is known to be $NP$-complete~\cite{GJ}:\\

\noindent\textbf{Problem 3-SAT.}\\
\textit{Instance}: collection $F = \{F_1,\ldots,F_m\}$ of clauses on a finite set of variables $U$, $|U| = n$, such that $|F_i| = 3$ for $i \in \{1,\ldots,m\}$.\\
\textit{Question}: is there a truth assignment $\tau$ for $U$ that satisfies all the clauses in $F$? $\square$\\

Boolean formula $F = F_1\wedge F_2\wedge \ldots \wedge F_m$ is satisfied under assignment $\tau$ if and only if each clause is true. Clause $F_j$ is true if and only if it contains either literal $x_{i_j}$ such that $\tau (x_{i_j}) = true$ or literal $\overline{x}_{i_j}$ such that $\tau (\overline{x}_{i_j}) = false$. 

We also define the decision version of the evader's problem (\textbf{EP}) for $T = 2$:\\

\noindent \textbf{Problem 2-EP.}\\
\textit{Instance}: network $G$ together with source and destination nodes, subset of arcs $A_0$ known to the interdictor, the interdiction budget $k\in\mathbb{Z}_{>0}$ and threshold $h\in\mathbb{R}_{>0}$.\\
\textit{Question}: is there two paths $P_1, P_2 \in \mathcal{P}_{sf}(G)$ of total cost at most $h$ that can be traversed sequentially by the evader given that the interdictor is a greedy semi-oracle? $\square$\\

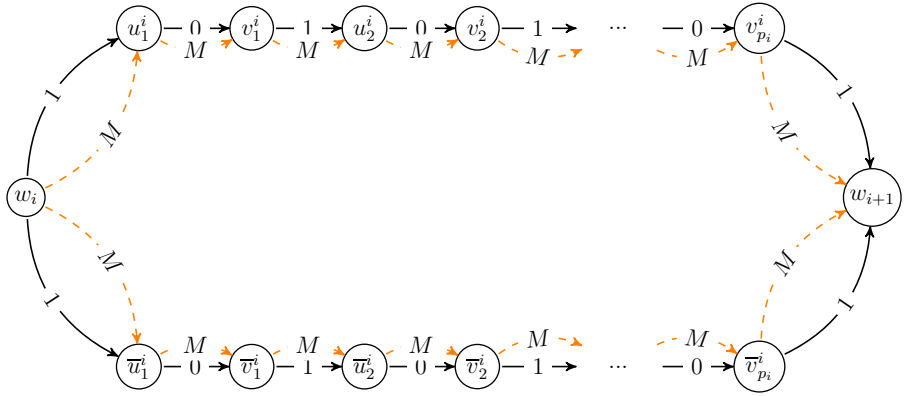
\begin{figure}
\centering
\begin{tikzpicture}[scale=0.75,transform shape]
\Vertex[x=0,y=0,L = $w_i $]{1}
\Vertex[x=15,y=0,L = $w_{i+1}$]{13}
\Vertex[x=2,y=3,L = $u^i_1$]{3}
\Vertex[x=2,y=-3,L = $\overline{u}^i_1$]{4}
\Vertex[x=4,y=3,L = $v^i_1$]{5}
\Vertex[x=4,y=-3,L = $\overline{v}^i_1$]{6}
\Vertex[x=6,y=3,L = $u^i_2$]{7}
\Vertex[x=6,y=-3,L = $\overline{u}^i_2$]{8}
\Vertex[x=8,y=3,L = $v^i_2$]{9}
\Vertex[x=8,y=-3,L = $\overline{v}^i_2$]{10}
\Vertex[x=13,y=3,L = $v^i_{p_i}$]{11}
\Vertex[x=13,y=-3,L = $\overline{v}^i_{p_i}$]{12}
\tikzstyle{VertexStyle}=[circle,fill=none]
\Vertex[x=10.5,y=3,L = $\quad...\quad$]{A}
\Vertex[x=10.5,y=-3, L = $\quad...\quad$]{B}
\tikzstyle{LabelStyle}=[fill=white,sloped]
\tikzstyle{EdgeStyle}=[post]
\Edge[label=$0$](3)(5)
\Edge[label=$1$](5)(7)
\Edge[label=$0$](7)(9)
\Edge[label=$0$](4)(6)
\Edge[label=$1$](6)(8)
\Edge[label=$0$](8)(10)
\Edge[label=$1$](10)(B)
\Edge[label=$1$](9)(A)
\Edge[label=$0$](A)(11)
\Edge[label=$0$](B)(12)
\tikzstyle{EdgeStyle}=[post, bend left]
\Edge[label=$1$](1)(3)
\Edge[label=$1$](11)(13)
\tikzstyle{EdgeStyle}=[post, bend left, dashed, orange]
\Edge[label=$M$](4)(6)
\Edge[label=$M$](6)(8)
\Edge[label=$M$](8)(10)
\Edge[label=$M$](10)(B)
\Edge[label=$M$](1)(4)
\Edge[label=$M$](12)(13)
\Edge[label=$M$](B)(12)
\tikzstyle{EdgeStyle}=[post, bend right]
\Edge[label=$1$](1)(4)
\Edge[label=$1$](12)(13)
\tikzstyle{EdgeStyle}=[post, bend right, dashed, orange]
\Edge[label=$M$](3)(5)
\Edge[label=$M$](5)(7)
\Edge[label=$M$](7)(9)
\Edge[label=$M$](9)(A)
\Edge[label=$M$](1)(3)
\Edge[label=$M$](11)(13)
\Edge[label=$M$](A)(11)
\end{tikzpicture}
\caption{\footnotesize Construction of the $i$-th lobe corresponding to variable $x_i$, where $p_i$ is the number of occurrences of variable $x_i$ in the clauses. Yellow dashed arcs are unremovable, and each of them has cost $M$ that is a sufficiently large constant parameter. The cost of removable arcs is either 0 or 1 as depicted for each arc.}
\label{pic:clause}
\end{figure}

\looseness-1The proof of our main result below is based on the reduction from the \textbf{3-SAT} problem, where for any instance of \textbf{3-SAT} we construct a particular instance of the \textbf{2-EP} problem. Following the discussion above we construct the instance of \textbf{2-EP} such that feasibility of any evasion solution can be checked in polynomial time with respect to the number of arcs, i.e., the $k$-most vital arcs problem in both $G[A_0]$ and $G[A_1]$ is polynomially solvable. Our construction is inspired and similar to the one used in \cite{18}, where it is shown that the problem of finding two minimum-cost arc-disjoint paths with non-uniform costs (e.g., changing over time or type of flow) is strongly $NP$-complete. However, our problem setting requires a somewhat different arc cost structure and the use of unremovable arcs defined at the beginning of this section.

Specifically, given boolean formula $F = F_1\wedge F_2\wedge\ldots\wedge F_m$, let $p_i$ be the number of occurrences of variable $x_i$ in $F$. For each variable $x_i$ we construct a \emph{lobe} as illustrated in Figure \ref{pic:clause}. 

The lobes are connected to one another in series with $w_1 = s$ and $w_{n+1} = f$. Recall that $s$ and $f$ are source and destination nodes, respectively. For each clause $F_j$, we add two nodes $y_j$, $z_j$, $j \in \{1,\ldots,m\}$ together with arcs $(s,y_1)$, $(z_j, y_{j+1})$, $j \in \{1,\ldots, m-1\}$ and $(z_m, f)$ of cost $0$. Finally, to connect clauses to variables we add the following arcs with zero costs: $(y_j, u^i_q)$ and $(v^i_q, z_j)$, if the $q$-th occurrence of variable $x_i$ is the literal $x_i$, which is a literal in clause $F_j$; $(y_j, \overline{u}^i_q)$ and $(\overline{v}^i_q, z_j)$, if the $q$-th occurrence of variable $x_i$ is the literal $\overline{x}_i$, which is a literal in clause $F_j$. We refer to Figure~\ref{pic:np-example} that illustrates the constructed graph for $F = (x_1\vee x_2 \vee x_3)\wedge(\overline{x}_1\vee \overline{x}_2 \vee x_3)\wedge(\overline{x}_1\vee x_2 \vee \overline{x}_3)$.

\begin{figure}
\centering
\begin{tikzpicture}[scale=0.75,transform shape]
\Vertex[x=0,y=0,L = $s$]{1}
\Vertex[x=7,y=0,L = $w_2$]{2}
\Vertex[x=14,y=0,L = $w_3$]{3}
\Vertex[x=21,y=0,L = $f$]{4}
\Vertex[x=2,y=-5,L = $y_1$]{5}
\Vertex[x=5,y=-5,L = $z_1$]{6}
\Vertex[x=9,y=-5,L = $y_2$]{7}
\Vertex[x=12,y=-5,L = $z_2$]{8}
\Vertex[x=16,y=-5,L = $y_3$]{9}
\Vertex[x=19,y=-5,L = $z_3$]{0}
\tikzstyle{VertexStyle} = [circle, inner sep=-2pt, minimum size=0pt, style = white, fill = black,sloped]
\Vertex[x=1,y=2, L=$ $]{10}
\Vertex[x=1,y=-2, L=$ $]{11}
\Vertex[x=2,y=2, L=$ $]{12}
\Vertex[x=2,y=-2,L=$ $]{13}
\Vertex[x=3,y=2,L=$ $]{14}
\Vertex[x=3,y=-2,L=$ $]{15}
\Vertex[x=4,y=2,L=$ $]{16}
\Vertex[x=4,y=-2,L=$ $]{17}
\Vertex[x=5,y=2,L=$ $]{18}
\Vertex[x=5,y=-2,L=$ $]{19}
\Vertex[x=6,y=2,L=$ $]{20}
\Vertex[x=6,y=-2,L=$ $]{21}

\Vertex[x=8,y=2,L=$ $]{22}
\Vertex[x=8,y=-2,L=$ $]{23}
\Vertex[x=9,y=2,L=$ $]{24}
\Vertex[x=9,y=-2,L=$ $]{25}
\Vertex[x=10,y=2,L=$ $]{26}
\Vertex[x=10,y=-2,L=$ $]{27}
\Vertex[x=11,y=2,L=$ $]{28}
\Vertex[x=11,y=-2,L=$ $]{29}
\Vertex[x=12,y=2,L=$ $]{30}
\Vertex[x=12,y=-2,L=$ $]{31}
\Vertex[x=13,y=2,L=$ $]{32}
\Vertex[x=13,y=-2,L=$ $]{33}

\Vertex[x=15,y=2,L=$ $]{34}
\Vertex[x=15,y=-2,L=$ $]{35}
\Vertex[x=16,y=2,L=$ $]{36}
\Vertex[x=16,y=-2,L=$ $]{37}
\Vertex[x=17,y=2,L=$ $]{38}
\Vertex[x=17,y=-2,L=$ $]{39}
\Vertex[x=18,y=2,L=$ $]{40}
\Vertex[x=18,y=-2,L=$ $]{41}
\Vertex[x=19,y=2,L=$ $]{42}
\Vertex[x=19,y=-2,L=$ $]{43}
\Vertex[x=20,y=2,L=$ $]{44}
\Vertex[x=20,y=-2,L=$ $]{45}

\tikzstyle{LabelStyle}=[fill=white,sloped]
\tikzstyle{EdgeStyle}=[post]
\Edge[](12)(14)
\Edge[](14)(16)
\Edge[](16)(18)
\Edge[](18)(20)

\Edge[](22)(24)
\Edge[](24)(26)
\Edge[](26)(28)
\Edge[](28)(30)

\Edge[](34)(36)
\Edge[](36)(38)
\Edge[](40)(42)
\Edge[](42)(44)

\Edge[label = $0$](5)(22)
\Edge[label = $0$](24)(6)
\Edge[label = $0$](5)(34)
\Edge[label = $0$](36)(6)

\Edge[label = $0$](7)(15)
\Edge[label = $0$](17)(8)
\Edge[label = $0$](7)(27)
\Edge[label = $0$](29)(8)

\Edge[label = $0$](9)(19)
\Edge[label = $0$](21)(0)
\Edge[label = $0$](9)(43)
\Edge[label = $0$](45)(0)

\tikzstyle{EdgeStyle}=[post, green, thick]
\Edge[](11)(13)
\Edge[](13)(15)
\Edge[](15)(17)
\Edge[](17)(19)
\Edge[](19)(21)

\Edge[](23)(25)
\Edge[](25)(27)
\Edge[](27)(29)
\Edge[](29)(31)
\Edge[](31)(33)

\Edge[](35)(37)
\Edge[](37)(39)
\Edge[](39)(41)
\Edge[](41)(43)
\Edge[](43)(45)

\tikzstyle{EdgeStyle}=[post, bend right, green, thick]
\Edge[](1)(11)
\Edge[](21)(2)
\Edge[](2)(23)
\Edge[](33)(3)
\Edge[](3)(35)
\Edge[](45)(4)

\tikzstyle{EdgeStyle}=[post, bend left]
\Edge[](1)(10)
\Edge[](20)(2)
\Edge[](2)(22)
\Edge[](32)(3)
\Edge[](3)(34)
\Edge[](44)(4)

\tikzstyle{EdgeStyle}=[post, bend right, blue, thick]
\Edge[label = $0$](1)(5)
\Edge[label = $0$](0)(4)

\tikzstyle{EdgeStyle}=[post, blue, thick]
\Edge[label = $0$](6)(7)
\Edge[label = $0$](8)(9)
\Edge[label = $0$](5)(10)
\Edge[](10)(12)
\Edge[label = $0$](12)(6)
\Edge[label = $0$](7)(38)
\Edge[](38)(40)
\Edge[label = $0$](40)(8)
\Edge[label = $0$](9)(30)
\Edge[](30)(32)
\Edge[label = $0$](32)(0)
\end{tikzpicture}
\caption{\footnotesize The graph corresponding to a \textbf{3-SAT} instance with $F = (x_1\vee x_2 \vee x_3)\wedge(\overline{x}_1\vee \overline{x}_2 \vee x_3)\wedge(\overline{x}_1\vee x_2 \vee \overline{x}_3)$. Here $n = 3$ is the number of lobes (or variables), $m = 3$ is the number of clauses, unremovable arcs and the arc costs within the lobes are not depicted (we refer to Figure \ref{pic:clause} for the detailed depiction of the lobes). Nodes $w_1=s$ and $w_4=f$ are the source and destination nodes of the constructed graph, respectively. Suppose $\tau(x_i) = 1$ for all $i \in \{1,\ldots,3\}$. Green and blue arcs form paths $P_1$ and $P_2$, respectively, i.e., the solution of the evader's problem with $T = 2$, $A_0 = \{(s, y_1),(z_m, f), (z_j, y_{j+1}), j \in \{1,\ldots,m - 1\}\}\cup B_0$, where $B_0$ is the set of unremovable arcs, $h = 3m + n$, $k$ is such that $3m \leq k \leq 6m + n$ and $M \geq 2$.}
\label{pic:np-example}
\end{figure}
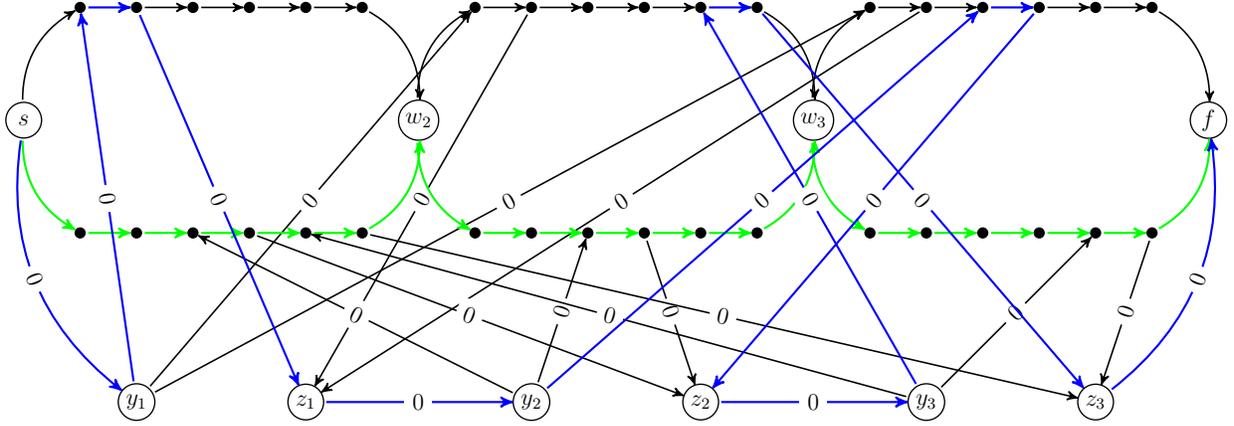
%

\begin{theorem} \label{th1}
Problem \textbf{2-EP} is strongly $NP$-complete for the class of network instances where the interdiction problem (in $G[A_0]$ and $G[A_1]$) is polynomially solvable.
\end{theorem}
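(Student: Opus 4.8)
The plan is to prove both membership in $NP$ and $NP$-hardness. Since every arc cost in the construction of Figures~\ref{pic:clause}--\ref{pic:np-example} is $0$, $1$, or the constant $M\ge 2$, and the threshold $h=3m+n$ is polynomially bounded, a polynomial reduction from \textbf{3-SAT} will in fact yield \emph{strong} $NP$-completeness. For membership I would take the pair of paths $(P_1,P_2)$ together with the induced blocking sets $(I_1,I_2)$ as a certificate. Verification proceeds epoch by epoch: using the assumed polynomial algorithm for the $k$-most vital arcs problem I check that $I_1$ attains the optimal interdiction value in $G[A_0]$ and $I_2$ attains it in $G[A_1]=G[A_0\cup P_1]$, that $P_t\in\mathcal{P}_{sf}(G[A\setminus I_t])$, and that $\ell(P_1)+\ell(P_2)\le h$. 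Because the greedy semi-oracle of Definition~\ref{def: greedy semi-oracle} is deterministic, given $P_1$ its response $I_2$ is uniquely determined, so confirming that the certified $I_1,I_2$ are exactly the worst-for-the-evader, maximum-cardinality selections among interdiction-optimal sets reduces to polynomially many calls to the interdiction oracle; this is the routine part.

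For hardness I would argue both implications around the identity that the canonical solution costs exactly $h=\sum_i(p_i+1)=3m+n$ (using $\sum_i p_i=3m$ and $n$ lobes). In the forward direction, given a satisfying assignment $\tau$, I route $P_1$ (epoch $1$) straight through the lobes, taking in lobe $i$ the chain \emph{opposite} to the truth value of $x_i$ so that the satisfying-literal occurrence arcs remain unrevealed; this gives $\ell(P_1)=3m+n$. I route $P_2$ (epoch $2$) along the skeleton $s\to y_1\to\cdots\to z_m\to f$, crossing lobe $i$ at an occurrence $u^i_q$ or $\overline{u}^i_q$ of a literal satisfying clause $F_j$, all of whose arcs have cost $0$, so $\ell(P_2)=0$. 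I then verify consistency with the semi-oracle: in $G[A_0]$ the only removable arcs are the cost-$0$ skeleton arcs, and deleting them does not change $z(G[A_0])$ because every $s$-$f$ path in $G[A_0]$ must cross lobes on unremovable $M$-arcs; hence every blocking is interdiction-optimal, and the worst-for-evader, maximum-cardinality rule makes $I_1$ delete the entire clause skeleton, which is precisely what forbids the cost-$0$ clause route in epoch $1$ and pushes $P_1$ onto the lobes. In $G[A_1]$, by contrast, the revealed chain yields a cheap $s$-$f$ path, so blocking its arcs strictly raises the interdiction value, and among interdiction-optimal sets the tie-break deletes the revealed occurrence arcs, leaving the complementary (unrevealed) true-literal occurrence arcs free for $P_2$ at cost $0$.

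In the reverse direction, suppose some $(P_1,P_2)$ has total cost $\le h$. Since $\ell(P_2)\ge 0$ and $P_1$ is confined to the lobes in epoch $1$ with each lobe crossing costing at least $p_i+1$, we get $\ell(P_1)\ge 3m+n=h$; therefore $\ell(P_1)=3m+n$ and $\ell(P_2)=0$, which forces $P_1$ to be a single cheap chain per lobe and $P_2$ to use only cost-$0$ arcs. Reading off $\tau$ from which chain $P_1$ \emph{avoids} in each lobe, the cost-$0$ path $P_2$ can cross lobe $i$ for clause $F_j$ only through an occurrence arc the semi-oracle did not block, i.e.\ one on the unrevealed chain, i.e.\ a literal made true by $\tau$; since $P_2$ reaches $f$, every clause contains such a literal, so $\tau$ satisfies $F$.

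The hard part will be the exact characterization of the semi-oracle's choices $I_1$ and $I_2$ for \emph{every} admissible $P_1$: proving that ``maximize the interdiction value, then the evader's loss, then the cardinality'' truly forces $I_1$ to delete the whole skeleton and $I_2$ to delete exactly the revealed occurrence arcs, while keeping the instance feasible at each epoch (Assumption~\textbf{A2}) and ruling out any non-canonical $s$-$f$ walk that beats cost $h$ by mixing chains, dipping through clause nodes, or exploiting the unremovable $M$-arcs. Pinning down that the bounds $3m\le k\le 6m+n$ give the semi-oracle just enough budget to block every revealed occurrence arc but never enough to disconnect the graph is where the accounting must be done carefully.
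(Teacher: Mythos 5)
Your proposal follows essentially the same route as the paper's proof: the identical reduction from \textbf{3-SAT} using the lobe/clause-skeleton construction with unremovable parallel $M$-arcs, the same parameters $h=3m+n$ and $3m\le k\le 6m+n$, and the same two-directional argument in which $P_1$ commits to one chain per lobe (encoding $\tau$) and the cost-$0$ path $P_2$ exists iff every clause has a true literal. The ``hard part'' you flag — that the semi-oracle must block the skeleton at $t=1$ and the revealed zero-cost occurrence arcs at $t=2$ — is resolved in the paper exactly as you sketch it, via the parallel unremovable arcs forcing removal in order of increasing arc cost.
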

\begin{proof}
$\Rightarrow$ Consider a ``yes''-instance of \textbf{3-SAT}. Thus, there exists assignment $\tau$ such that boolean formula $F = F_1\wedge F_2\wedge \ldots\wedge F_m$ is satisfied under $\tau$. Assume that we construct a graph for this instance as outlined in the discussion above. Next, let $A_0 = \{(s, y_1),(z_m, f), (z_j, y_{j+1}),\ j \in\{1,\ldots,m -1\}\}\cup B_0$, where $B_0$ is the set of unremovable arcs, $h = 3m + n$ and $k$ is such that $3m \leq k \leq 6m + n$, where $M \geq 2$.

Observe that $I_1 = A_0\setminus B_0$, since the interdictor is a greedy semi-oracle. Actually, $k$ is sufficiently large to block all arcs in $I_1$ and interdiction of $I_1$ maximizes the cost of the shortest path in $G[A_0\setminus I_1]$, i.e., $I_1 \in \argmax \{z(G[A_0\setminus I]): |I|\leq k, I\subseteq A_0\}$.

At $t = 1$ let the evader choose path $P_1$ that is constructed in the following way. It traverses through the lower part of the $i$-th lobe, if $\tau (x_i) = true$, and it traverses through the upper part, if $\tau (x_i) = false$. Observe that $\ell(P_1) = 3m + n$. The information collected by the interdictor is updated and thus, $A_1 = A_0 \cup P_1$. Note that set of arcs $A_1$ consists of path $P_1$ together with parallel unremovable arcs and set $A_0 \setminus B_0$ that contains $m + 1$ distinct arcs that do not form a path from $s$ to $f$. Furthermore, any evasion decision at $t = 1$ does not include arcs in $A_0$ and goes through all the lobes. We conclude that the interdiction problem in both $G[A_0]$ and $G[A_1]$ is polynomially solvable.

The presence of unremovable arcs of sufficiently large costs that are parallel to arcs in $P_1$ enforces the interdictor to remove arcs in the order of their costs. That is he blocks arcs of zero cost first and then blocks arcs of unit cost. Recall that $k \geq 3m$. Therefore, the blocking decision at $t = 2$ consists of at least $3m$ arcs of $P_1$ including all arcs with zero cost. We conclude that $I_2 \subseteq P_1$ and $I_2\cap I_1 = \emptyset$ due to the fact $k \leq |P_1| = 6m + n$.

Then there exist at least $m$ un-blocked subpaths $\{y_j \rightarrow u^i_k \rightarrow v^i_k \rightarrow z_j\}$ or $\{y_j \rightarrow \overline{u}^i_k \rightarrow \overline{v}^i_k \rightarrow z_j\}$ that correspond to the variable assignments in $\tau$. Together with arcs $(s, y_1)$ and $(z_m, f)$ they form path $P_2$ that is arc-disjoint with $P_1$ by their construction and can be traversed by the evader at $t = 2$. The cumulative loss of the evader equals $h = 3m + n$. Hence, the answer to the evader's problem is ``yes.''

$\Leftarrow$ Consider a ``yes''-instance of the evader's problem with $A_0 = \{(s, y_1),(z_m, f), (z_j, y_{j+1}), j \in \{1,\ldots,m - 1\}\}\cup B_0$, $h = 3m + n$ and $k$ is such that $3m \leq k \leq 6m + n$. Then there exist two paths that can be traversed sequentially by the evader and their total cost is at most $h$.

Observe that the choice of $I_1$ does not depend on the evader's actions. Path $P_1$ goes through either the upper or the lower part of each lobe. Therefore, path $P_1$ in the evader's decision has cost $h = 3m + n$, while $P_2$ has zero cost. Furthermore, the fact that $k \geq 3m$ implies that $P_2$ is arc-disjoint with $P_1$.

Next, we note that as the cost of $P_2$ is zero, then it needs to contain arcs $(s,y_1)$ and $(z_m,f)$. Furthermore, $P_2$ needs to traverse through arcs $(z_j,y_{j+1})$, $j \in \{1,\ldots,m - 1\}\}$ and each lobe through the zero cost arcs. We construct assignment $\tau$ in the following way: if $P_1$ traverses through the lower part of the $i$-th lobe, let $\tau (x_i) = true$ and, if it traverses through the upper part, then let $\tau (x_i) = false$. According to this assignment due to the existence of $P_2$ each clause $F_j$, $j \in \{1,\ldots,m - 1\}\}$, is satisfied, which implies the necessary result.
\end{proof}

\looseness-1{We conclude that the evader's problem is computationally hard even when $T = 2$ and the network instances admit a polynomial-time solution of the $k$-most vital arcs problem. The strong $NP$-completeness of \textbf{2-EP} also implies that there is no fully polynomial-time approximation scheme (or FPTAS) for the evader's problem unless $P = NP$~\cite{GJ}. For future work it can be of interest to explore whether constant factor approximation algorithms exist for \textbf{2-EP} and how the complexity of the problem changes if $k$ is bounded by a constant.}

{In view of the above in the remainder of the paper we explore analytical properties of the optimal evasion policies in a simple case, i.e., $T = 2$ and $A_0 = \emptyset$, and then exploit them to propose a heuristic algorithm for an arbitrary set of initial information and any time horizon that outperforms a greedy approach.}


\section{Basic Analysis of Evasion Policies}\label{sec:theory}

In this section we provide basic analysis of evasion policies for the case of two decision epochs when no initial information is available to the interdictor, i.e., $T = 2$ and $A_0 = \emptyset$. Specifically, under some rather mild assumption we show that an optimal solution of the evader's problem is either greedy or consists of two distinct paths that intersect with the shortest path in the non-interdicted network. The latter observation gives us a basis for developing a heuristic approach of finding strategic evasion decisions; see Section~\ref{sec:algorithm}. Then we demonstrate that for $k = 1$ the greedy evasion policy is optimal. Finally, it can be rather easily shown that for sufficiently large values of $k$ an optimal evasion solution consists of two arc-disjoint paths.

%
In the remainder of this section to {simplify our derivations} we assume that the costs of all possible paths from $s$ to $f$ are distinct. Thus, we can enumerate them in the strictly increasing order of their costs, i.e., \begin{equation}\label{eq:analysis-assumption}\ell(P^{(1)}) < \ell(P^{(2)}) < \ldots < \ell(P^{(\mu)}),\end{equation}
\noindent where $|\mathcal{P}_{sf}(G)| = \mu$. Denote by $\nu(P)$ the index of path $P \in \mathcal{P}_{sf}(G)$ in the above ordering.

\begin{lemma} \label{remark paths}
\upshape
Let $A_0 = \emptyset$, $k \geq 1$ and assume that the evader is greedy. Denote by $r$ the index of path traversed by the evader at $t = 2$, i.e., $r = \nu(P^{SP}_2)$. Then for any $z \in \{1,2,\ldots,r - 1\}$ path $P^{(z)}$ is blocked by the interdictor at $t = 2$.
\end{lemma}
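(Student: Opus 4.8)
The plan is to argue by contradiction directly from the definition of a greedy evader, so that the statement reduces to a one-step consistency check on which paths remain available after the second-round interdiction. First I would fix notation for availability: a path $P\in\mathcal{P}_{sf}(G)$ is \emph{blocked at} $t=2$ precisely when it uses an interdicted arc, i.e. $P\notin\mathcal{P}_{sf}(G[A\setminus I_2])$, and it is \emph{available} otherwise. By the greedy rule the evader's second-round path $P^{SP}_2$ is the minimum-cost member of $\mathcal{P}_{sf}(G[A\setminus I_2])$, which is well defined and unique because the path costs are pairwise distinct and strictly ordered as in (\ref{eq:analysis-assumption}); by definition $r=\nu(P^{SP}_2)$ is its position in that global ordering, so $\ell(P^{SP}_2)=\ell(P^{(r)})$.

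The key step is then immediate. Suppose, for contradiction, that some $P^{(z)}$ with $z\in\{1,\ldots,r-1\}$ is available at $t=2$, i.e. $P^{(z)}\in\mathcal{P}_{sf}(G[A\setminus I_2])$. Since $z<r$, the strict ordering (\ref{eq:analysis-assumption}) gives $\ell(P^{(z)})<\ell(P^{(r)})=\ell(P^{SP}_2)$. Thus $P^{(z)}$ is an available $s$--$f$ path strictly cheaper than the one the greedy evader selected, contradicting the fact that $P^{SP}_2$ attains the minimum cost over $\mathcal{P}_{sf}(G[A\setminus I_2])$. Hence every $P^{(z)}$ with $z<r$ must be blocked, which is the claim.

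I do not expect a genuine obstacle here: the lemma is a direct consequence of greediness together with the strict cost ordering, and the argument never uses the specific shape of $I_2$. The only points that deserve a sentence are that the greedy choice is well defined (guaranteed by the distinct-cost assumption) and that ``blocked'' is read as ``not in $\mathcal{P}_{sf}(G[A\setminus I_2])$.'' For completeness I would also note that the hypotheses $A_0=\emptyset$ and $k\ge 1$ are what make the setup meaningful --- since $A_1=P^{SP}_1=P^{(1)}$ is a single $s$--$f$ path, the greedy interdictor removes at least one of its arcs, so $P^{(1)}$ itself is blocked and $r\ge 2$ --- but this remark is not needed for the proof of the stated inequality.
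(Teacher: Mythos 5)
Your proof is correct and is essentially the paper's argument: the paper simply states that the result ``follows from the definition of the greedy evader,'' and your contradiction via the strict cost ordering (\ref{eq:analysis-assumption}) is exactly the spelled-out version of that one-line justification. Your closing remark that $r\ge 2$ because $A_1=P^{(1)}$ and the interdictor blocks at least one of its arcs likewise matches the paper's brief note that $A_0=\emptyset$ and $k\ge 1$ give $r>1$.
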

\begin{proof}
The results follows from the definition of the greedy evader. Since $A_0 = \emptyset$ and $k \geq 1$, we have $r > 1$.
\end{proof}
 Next, we provide some basic necessary conditions for an evasion solution to be optimal when $T = 2$ and $A_0 = \emptyset$.
\begin{theorem} \label{proposition2}
Assume that the interdictor is a greedy semi-oracle with $A_0 = \emptyset$ and $k \geq 1$. Let $T = 2$ and $r = \nu(P^{SP}_2)$, where $r\geq 2$. If $P^{OPT}_1 = P^{(i)}$ and $P^{OPT}_2 = P^{(j)}$ is an optimal solution of the evader's problem for $T = 2$, then either $i = 1$ and $j = r$, or $P^{(i)}$ and $P^{(j)}$ satisfy the following conditions:
\begin{align}
P^{(i)} \cap P^{(1)} \neq \emptyset \quad \mbox{and} \quad P^{(j)} \cap P^{(1)} \neq \emptyset, \label{eq2} \\
\ell(P^{(1)}) + \ell(P^{(r)}) > \ell(P^{(i)}) + \ell(P^{(j)}) ,\label{eq3}\\
1 < i < r, \ 1 < j < r \quad \mbox{and} \quad i \neq j. \label{eq4}
\end{align}
\end{theorem}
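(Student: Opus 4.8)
The plan is to reduce the two-epoch problem to a single optimization over the evader's first-round path and then read the structural conditions off the semi-oracle's best response. Since $A_0=\emptyset$ forces $I_1=\emptyset$, the evader picks $P_1=P^{(i)}$ freely, the interdictor learns exactly $A_1=P^{(i)}$, and at $t=2$ the graph $G[A_1]$ is a single $s$--$f$ path, so every nonempty $I\subseteq P^{(i)}$ with $|I|\le k$ disconnects it and is a set of $k$-most vital arcs in $G[A_1]$. Hence the greedy semi-oracle is free to pick any nonempty $I_2\subseteq P^{(i)}$, $|I_2|\le k$, maximizing $z(G[A\setminus I_2])$, and because round~$2$ is terminal the evader then simply takes the shortest available path. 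I would encode this as $g(i):=\max\{z(G[A\setminus I]):\emptyset\neq I\subseteq P^{(i)},\,|I|\le k\}$, so that the total loss of playing $P^{(i)}$ first equals $\ell(P^{(i)})+g(i)$, with the forced second path $P^{(j)}$ satisfying $\ell(P^{(j)})=g(i)$. The greedy evader corresponds to $i=1$, and by the definition of $r$ we have $g(1)=\ell(P^{(r)})$; this pair is always feasible, so the optimal loss never exceeds the greedy cost $\ell(P^{(1)})+\ell(P^{(r)})$.

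The single fact I would extract from Lemma~\ref{remark paths} is an intersection property: since the greedy interdiction set $I_2^{\,g}\subseteq P^{(1)}$ blocks every path $P^{(z)}$ with $z<r$, and a blocked path shares an arc with $I_2^{\,g}$, we get $P^{(z)}\cap P^{(1)}\neq\emptyset$ for all $z<r$; equivalently, any path disjoint from $P^{(1)}$ has index at least $r$. With this in hand the argument splits on the first path. If $i=1$, the terminal best response coincides with the greedy evader's, forcing $j=r$ and placing us in the first branch. So assume the solution is not the greedy pair; then $i\ge 2$. I would first show $P^{(i)}\cap P^{(1)}\neq\emptyset$: if $P^{(i)}$ were disjoint from $P^{(1)}$, no admissible $I_2\subseteq P^{(i)}$ could block $P^{(1)}$, so $P^{(1)}$ survives, $j=1$, and $g(i)=\ell(P^{(1)})$; together with $i\ge r$ (from the intersection property) this yields total loss $\ell(P^{(i)})+\ell(P^{(1)})\ge\ell(P^{(r)})+\ell(P^{(1)})$, i.e. no improvement over greedy.

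Granting $P^{(i)}\cap P^{(1)}\neq\emptyset$, the semi-oracle can delete one shared arc, killing $P^{(1)}$ (the remaining path exists by Assumption~\textbf{A2}), so $g(i)\ge\ell(P^{(2)})>\ell(P^{(1)})$ and hence $j>1$; this is precisely where the worst-case (maximizing) nature of the semi-oracle enters. The remaining inequalities in (\ref{eq4}) follow from the strict improvement (\ref{eq3}): writing $\ell(P^{(i)})+\ell(P^{(j)})<\ell(P^{(1)})+\ell(P^{(r)})$ and using that the costs $\ell(P^{(\cdot)})$ are strictly increasing in the index by (\ref{eq:analysis-assumption}), $i\ge r$ would force $\ell(P^{(j)})<\ell(P^{(1)})$ and $j\ge r$ would force $\ell(P^{(i)})<\ell(P^{(1)})$, both impossible; thus $1<i<r$ and $1<j<r$. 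Distinctness $i\neq j$ holds because $P^{(j)}$ avoids the nonempty set $I_2\subseteq P^{(i)}$ and so cannot equal $P^{(i)}$. Finally $j<r$ feeds back into the intersection property to give $P^{(j)}\cap P^{(1)}\neq\emptyset$, completing (\ref{eq2}).

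The step I expect to be the main obstacle is establishing the \emph{strict} inequality (\ref{eq3}) for every optimal non-greedy solution, i.e. excluding ties in which a non-greedy pair attains exactly the greedy cost (for instance the degenerate choice ``play $P^{(r)}$ first, then $P^{(1)}$'' whenever $P^{(r)}$ happens to be disjoint from $P^{(1)}$, which is feasible and ties greedy). Such ties are not ruled out by the distinct-path-cost assumption alone, since that controls individual costs but not their sums; I would resolve them through the deterministic, consistent tie-breaking on the evader's side (Definition~\ref{def: evasion policy}), under which the evader defaults to the greedy pair when it is optimal, so that any reported optimal solution outside the first branch strictly dominates greedy. Once (\ref{eq3}) is secured, the case analysis of the previous paragraph makes all the remaining deductions routine.
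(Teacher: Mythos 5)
Your proof is correct and follows essentially the same route as the paper's: both arguments rest on the consequence of Lemma~\ref{remark paths} that every path of index below $r$ must meet $P^{(1)}$, and then eliminate $i=1$, $j=1$, $i\ge r$ and $j\ge r$ by the same index/cost comparisons (your derivation of $j>1$ — the semi-oracle is forced to cut an arc shared with $P^{(1)}$, so the second-round loss strictly exceeds $\ell(P^{(1)})$ — is just a more direct version of the paper's case~(\emph{iii}), which instead infers arc-disjointness and falls back to the $i\ge r$ case). On the tie issue you flag as the main obstacle: your concern is legitimate (a non-greedy pair such as $P^{(r)}$ followed by $P^{(1)}$ can indeed tie the greedy cost when $P^{(r)}\cap P^{(1)}=\emptyset$, violating (\ref{eq4})), and the paper sidesteps it by casing on whether the greedy \emph{policy} is optimal and simply asserting (\ref{eq3}) in the non-greedy branch; however, your proposed repair overreads Definition~\ref{def: evasion policy}, which only requires determinism and says nothing about defaulting to the greedy pair, so the honest reading — for both your write-up and the paper's — is that the second branch of the theorem implicitly concerns optimal solutions that strictly improve on greedy.
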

\begin{proof}
First, suppose that the greedy evasion policy is optimal. By the definition of $r$ we have that the greedy evader traverses through $P^{(1)}$ and $P^{(r)}$ sequentially. Hence, $i = 1$ and $j = r$.

Conversely, assume that the optimal evasion policy is not greedy, which implies that condition (\ref{eq3}) holds. Thus, we need to show that conditions (\ref{eq2}) and (\ref{eq4}) are satisfied.

If condition (\ref{eq4}) does not hold, then we consider the following possible scenarios:
\begin{itemize}
\item[(\emph{i})]
We have that either $i\geq r$, or $j\geq r$. If $i \geq r$, then by assumption (\ref{eq:analysis-assumption}) the following inequalities hold:
$$\ell(P^{(1)}) \leq \ell(P^{(j)}),\quad \ell(P^{(r)}) \leq \ell(P^{(i)}),$$
\noindent which implies that (\ref{eq3}) is violated and results in a contradiction. The case $j \geq r$ is similar;
\item[(\emph{ii})]
We have that $i = 1$. Clearly in this case the greedy evasion solution is optimal and (\ref{eq3}) is not satisfied (recall that the greedy evader traverses through $P^{(1)}$ at $t = 1$);
\item[(\emph{iii})]
We have that $j = 1$. In other words, path $P^{(1)}$ is not blocked by the interdictor at $t = 2$. Since the interdictor is a greedy semi-oracle and the evader traverses through $P^{(i)}$ and $P^{(1)}$ sequentially we conclude that $P^{(i)}$ and $P^{(1)}$ are arc-disjoint, i.e., $P^{(i)} \cap P^{(1)} = \emptyset$. Thus, conversely, if the evader is greedy, then path $P^{(i)}$ cannot be blocked by the interdictor at $t = 2$ as $A_0 = \emptyset$. From Lemma \ref{remark paths} we observe that $i\geq r$ and thus, we can simply refer to the discussion above in (\emph{i}).
\end{itemize}
We conclude that condition (\ref{eq4}) is satisfied.

Finally, assume that (\ref{eq2}) is violated and thus, either $P^{(i)}$ or $P^{(j)}$ is arc-disjoint with $P^{(1)}$. Hence, if the evader is greedy and $A_0 = \emptyset$, then either $P^{(i)}$ or $P^{(j)}$, respectively, cannot be blocked at $t = 2$. Therefore, we have that either $i \geq r$ or $j \geq r$ (recall again Lemma \ref{remark paths}), which contradicts (\ref{eq4}) and hence, completes the proof.
\end{proof}


This result implies that there exist two mutually exclusive alternatives: either the greedy evasion policy is optimal or there exists another evasion solution (better than the greedy one), which consists of two distinct paths that both have some arcs in common with the shortest path from $s$ to $f$ in the non-interdicted graph, see (\ref{eq2}).


Clearly, one natural question arising next is whether these two distinct paths of the latter alternative are either arc-disjoint or intersect by themselves as well. In particular, from Example~\ref{Example 1} in Section \ref{sec:mathmod} one may hypothesize that an optimal evasion solution for $T = 2$ and $A_0 = \emptyset$ is either greedy, or consists of two arc-disjoint paths of minimal total costs. While it is often the case in many simple problem instances, nevertheless this hypothesis does not hold in general. Specifically, next we construct an instance of the evader's problem, where an optimal evasion solution consists of two distinct paths that intersect along some arcs that are also contained in the shortest path from $s$ to $f$. The latter fact illustrates that the necessary conditions given by (\ref{eq2})-(\ref{eq4}) in Theorem \ref{proposition2} hold.


\begin{example} \label{counterexample} \upshape
Consider the network depicted in Figure \ref{graph_counterexample}. Let $s = 1$ and $f = 11$. Assume also that $T = 2$, $k = 3$ and $A_0 = \emptyset$. Observe that in the first decision epoch the greedy evader follows the shortest path $P^{SP}_1 = \{1 \rightarrow 2 \rightarrow \ldots \rightarrow 11\}$ of cost zero, while the interdictor blocks set of arcs $I_1 = \{(1,2), (4,5), (10,11)\}$. Hence, in the second round the evader must traverse through path $P^{SP}_2 = \{1 \rightarrow 12 \rightarrow 11\}$ with her total loss of $L^{SP}_2 = 0 + 11 = 11$ over two rounds.

Next, it can be verified that the cost of any arc-disjoint evasion solution from $s$ to $f$ is at least $11$. Nevertheless, consider a strategic evader who first traverses through path $P^{SE}_1 = \{1 \rightarrow 3 \rightarrow 4 \rightarrow \ldots \rightarrow 11\}$. Then the interdictor's solution is $I_1 = \{(4, 5), (6, 7), (8, 9)\}$ to block all paths of cost $8$. Then the strategic evader follows path $P^{SE}_2 = \{1 \rightarrow 10 \rightarrow 11\}$ and her cumulative loss over two rounds is given by $L^{SE}_2 = 1 + 9 = 10 < L^{SP}_2$. Thus, in this example neither greedy nor arc-disjoint evasion solutions are optimal. \vspace{-10mm}\flushright$\square$
\end{example}
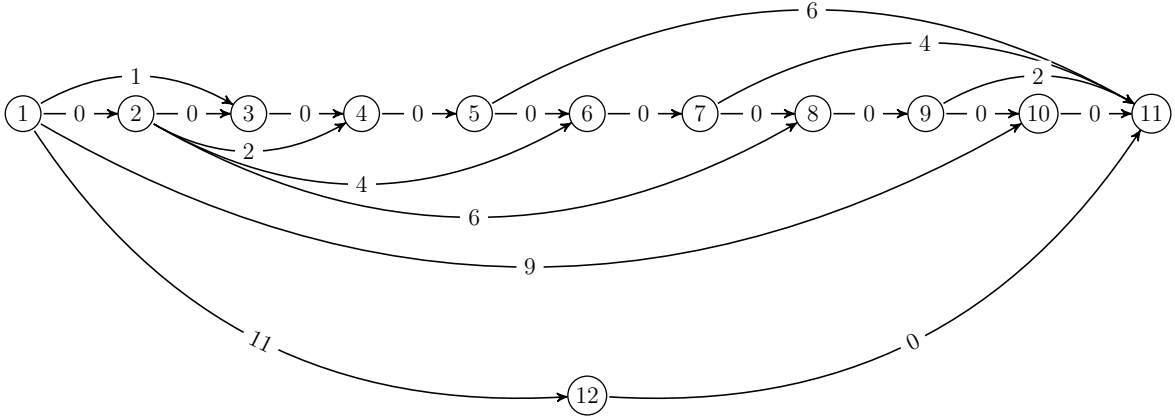
\begin{figure}
\centering
\begin{tikzpicture}[scale=0.75,transform shape]
\Vertex[x=0,y=0]{1}
\Vertex[x=2,y=0]{2}
\Vertex[x=4,y=0]{3}
\Vertex[x=6,y=0]{4}
\Vertex[x=8,y=0]{5}
\Vertex[x=10,y=0]{6}
\Vertex[x=12,y=0]{7}
\Vertex[x=14,y=0]{8}
\Vertex[x=16,y=0]{9}
\Vertex[x=18,y=0]{10}
\Vertex[x=20,y=0]{11}
\Vertex[x=10,y=-5]{12}
\tikzstyle{LabelStyle}=[fill=white,sloped]
\tikzstyle{EdgeStyle}=[post]
\Edge[label=$0$](1)(2)
\Edge[label=$0$](2)(3)
\Edge[label=$0$](3)(4)
\Edge[label=$0$](4)(5)
\Edge[label=$0$](5)(6)
\Edge[label=$0$](6)(7)
\Edge[label=$0$](7)(8)
\Edge[label=$0$](8)(9)
\Edge[label=$0$](9)(10)
\Edge[label=$0$](10)(11)
\tikzstyle{EdgeStyle}=[post, bend left]
\Edge[label=$1$](1)(3)
\Edge[label=$6$](5)(11)
\Edge[label=$4$](7)(11)
\Edge[label=$2$](9)(11)
\tikzstyle{EdgeStyle}=[post, bend right]
\Edge[label=$9$](1)(10)
\Edge[label=$2$](2)(4)
\Edge[label=$4$](2)(6)
\Edge[label=$6$](2)(8)
\Edge[label=$11$](1)(12)
\Edge[label=$0$](12)(11)
\end{tikzpicture}
\caption{\footnotesize The network used in Example \ref{counterexample}. The arc costs are depicted inside each arc.}
\label{graph_counterexample}
\end{figure}

\looseness-1However, both greedy and arc-disjoint evasion solutions can be shown to be optimal for specific values of parameter $k$. First, it can be shown that the greedy evasion policy is optimal when $k = 1$, i.e., the interdictor blocks one arc in each decision epoch.


\begin{proposition} \label{corrollary2}
Assume that the interdictor is a greedy semi-oracle with $A_0 = \emptyset$. Let $T=2$ and $k = 1$. Then the greedy evasion solution is optimal.

\begin{proof}
Recall from Lemma \ref{remark paths} that if the evader is greedy and $A_0 = \emptyset$, then for any $z \in \{1,2,\ldots,r - 1\}$ path $P^{(z)}$ is blocked at $t = 2$ by the interdictor. Since the interdictor's budget $k = 1$ we conclude that all paths $P^{(z)}$, $z \in \{1,2,\ldots,r - 1\}$, have an arc in common, i.e., there exist $e \in A$ such that $e \in P^{(z)}$ for all $z \in \{1,2,\ldots,r - 1\}$.

\looseness-1 Assume that the greedy evasion decision is not optimal. Then the optimal solution, namely, paths $P^{(i)}$ and $P^{(j)}$ must satisfy conditions (\ref{eq2}) - (\ref{eq4}) of Theorem \ref{proposition2}. In particular, we have that $1 < i,j < r$ and thus, these two paths have arc $e$ in common that is, $e \in P^{(i)} \cap P^{(j)}$. Following the discussion above arc $e$ also belongs to the shortest path $P^{(1)}$. Observe that with $A_1 = P^{(i)}$ the greedy semi-oracle interdictor must block $e$ and, thus, $P^{(j)}$ cannot be traversed by the strategic evader at $t = 2$. It contradicts with our earlier assumption that $P^{(i)}$ and $P^{(j)}$ form an optimal evasion solution.
\end{proof}
\end{proposition}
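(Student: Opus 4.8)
The plan is to argue by contradiction, pulling together Lemma~\ref{remark paths} with the necessary conditions of Theorem~\ref{proposition2}. First I would use Lemma~\ref{remark paths} to isolate a single ``bottleneck'' arc. Since the greedy evader traverses $P^{(1)}$ at $t=1$ (so that $A_1 = P^{(1)}$) and the interdictor may block only $k=1$ arc at $t=2$, the lemma forces the entire family $P^{(1)}, \dots, P^{(r-1)}$ to be cut by that one arc. Because a single arc cuts all of them, these paths must share a common arc $e$, and removing $e$ leaves $P^{(r)}$ as the cheapest surviving path, so $z(G[A \setminus \{e\}]) = \ell(P^{(r)})$. I would also record that $\ell(P^{(r)})$ is the \emph{largest} instantaneous loss that a single-arc removal from $P^{(1)}$ can enforce, since the greedy semi-oracle already selected a loss-maximizing arc when facing the greedy evader, and by definition of $r$ the resulting loss is $\ell(P^{(r)})$.

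Next I would assume the greedy solution is not optimal and take an optimal, strictly better, pair $P^{(i)}, P^{(j)}$. Theorem~\ref{proposition2} then delivers $1 < i < r$ and $1 < j < r$, so both indices lie in $\{2,\dots,r-1\}$. Combined with the previous step this gives $e \in P^{(i)} \cap P^{(j)}$, and $e$ also lies on the shortest path $P^{(1)}$. This is the point where the geometry of the optimal pair is pinned down: both of the evader's intended paths pass through the bottleneck $e$.

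The decisive step is to examine what actually occurs once the strategic evader plays $P^{(i)}$ at $t=1$. Then $A_1 = P^{(i)}$, and because $A_1$ is a single $s$--$f$ path, deleting any of its arcs disconnects $s$ from $f$ in the observed network, so blocking $e \in P^{(i)}$ is a valid greedy choice for the interdictor. Blocking $e$ enforces instantaneous loss $\ell(P^{(r)})$, which I would argue is the maximum single-arc loss available from $P^{(i)}$: any arc of $P^{(i)}$ forcing a larger loss would necessarily lie on $P^{(1)}$ as well, and the semi-oracle would already have used it against the greedy evader, contradicting the definition of $r$. Since the semi-oracle maximizes the evader's loss and $\ell(P^{(r)}) > \ell(P^{(j)})$ (because $j<r$), it strictly prefers cutting $e$ over any block that would leave $P^{(j)}$ open; hence $P^{(j)}$ is cut and the evader's realized second-round cost is $\ell(P^{(r)})$, not $\ell(P^{(j)})$. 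The realized cost of the pair is therefore $\ell(P^{(i)}) + \ell(P^{(r)})$, which exceeds the greedy cost $\ell(P^{(1)}) + \ell(P^{(r)})$ since $\ell(P^{(i)}) > \ell(P^{(1)})$. This contradicts optimality and, in particular, condition~\eqref{eq3}.

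I expect the main obstacle to be precisely this last step, namely justifying rigorously that the semi-oracle ``must block $e$,'' i.e.\ that $\ell(P^{(r)})$ is the maximum loss enforceable by a single-arc deletion from $P^{(i)}$ and that $e$ attains it. The clean route is to transfer the maximality of $\ell(P^{(r)})$ from the greedy run (first path $P^{(1)}$) to the strategic run (first path $P^{(i)}$) via the observation that any arc of $P^{(i)}$ inducing a strictly larger loss must also lie on $P^{(1)}$. Care is also needed with the semi-oracle's tie-breaking, but it suffices that the \emph{enforced loss} equals $\ell(P^{(r)})$ regardless of which maximizing arc is actually selected, so the contradiction is insensitive to how ties are resolved.
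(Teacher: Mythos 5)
Your proposal is correct and follows essentially the same route as the paper's proof: Lemma~\ref{remark paths} plus $k=1$ yields the common arc $e$ of $P^{(1)},\dots,P^{(r-1)}$, Theorem~\ref{proposition2} pins both paths of a hypothetically better pair to indices strictly between $1$ and $r$ (hence both contain $e$), and the semi-oracle's blocking of $e$ after $A_1=P^{(i)}$ kills $P^{(j)}$ and gives the contradiction. The only difference is that you spell out why the enforced second-round loss is exactly $\ell(P^{(r)})$ regardless of tie-breaking (any arc of $P^{(i)}$ forcing a larger loss would lie on $P^{(1)}$ and contradict the definition of $r$), a step the paper asserts in one line ("the greedy semi-oracle must block $e$") without this justification.
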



\begin{remark} \label{corrollary3}
\upshape
\looseness-1 Moreover, if $k$ is sufficiently large, then for $T=2$ it is rather easy to observe that if the interdictor is a greedy semi-oracle with $A_0 = \emptyset$, then an optimal evasion solution consists of two arc-disjoint paths. Specifically, since $A_0$ is empty, then in the first decision epoch there are no interdicted arcs. Consider an evasion path in the first decision epoch of any optimal evasion solution. If the value of $k$ is larger than or equal to the number of arcs in this path, then by the definition of a greedy semi-oracle all arcs in this path are interdicted in the second decision epoch. Thus, the evader's consequent evasion path cannot contain arcs from her previous evasion path.
%
\vspace{-9mm}\flushright$\square$\end{remark}

Next, recall that in Example \ref{Example 1} the loss of the greedy evader is equal to $3 + M$, while the optimal evasion solution has cost $8$. Hence, for any constant $\beta \in \mathbb{R}_{>0}$ and sufficiently large values of parameter $M$ we have:
$$\beta L^{SE}_2 < L^{SP}_2,$$
which implies that in general the greedy evasion solution cannot approximate an optimal evasion solution with any constant factor. The next example demonstrates that the latter property also holds for any arc-disjoint evasion solution.


\begin{example} \label{Example 4}
\upshape
Consider the network depicted in Figure \ref{fig: Example 4}. Assume that $T = 2$, $A_0 = \emptyset$ and $k = 2$. The greedy evader follows the shortest path of cost zero in the first decision epoch, i.e., $P^{SP}_1 = \{1 \rightarrow 2 \rightarrow \ldots \rightarrow |V|\}$. The interdictor blocks any $k = 2$ arcs of $P^{SP}_1$ and thus, the evader must use two arcs of cost $M$ in the second round. Observe that $L^{SP}_2 = 0 + 2M = 2M$.

On the other hand, any arc-disjoint solution contains all arcs of cost $M$ and thus, its cost is equal to $M \frac{|A|}{2}$. We conclude that the approximation factor of any arc-disjoint solution is $O(|A|)$ and therefore, not bounded by a constant.
\vspace{-10mm}\flushright$\square$
\end{example}

\begin{figure}
\centering
\begin{tikzpicture}[scale=0.75,transform shape]
\Vertex[x=0,y=0]{1}
\Vertex[x=4,y=0]{2}
\Vertex[x=12,y=0,L=\footnotesize $|V| - 1$]{n-1}
\Vertex[x=16,y=0,L=\footnotesize $|V|$]{n}
\tikzstyle{VertexStyle}=[circle,fill=none]
\Vertex[x=8,y=0,L = $\quad...\quad$]{A}
\tikzstyle{LabelStyle}=[fill=white,sloped]
\tikzstyle{EdgeStyle}=[post]
\Edge[label=$0$](1)(2)
\Edge[label=$0$](2)(A)
\Edge[label=$0$](A)(n-1)
\Edge[label=$0$](n-1)(n)
\tikzstyle{EdgeStyle}=[post, bend left]
\Edge[label=$M$](1)(2)
\Edge[label=$M$](2)(A)
\Edge[label=$M$](A)(n-1)
\Edge[label=$M$](n-1)(n)
\Edge[label=$M$](n-1)(n)
\tikzstyle{EdgeStyle}=[post, bend right]
\end{tikzpicture}
\caption{\footnotesize The network used in Example \ref{Example 4}. The arc costs are depicted inside each arc, where $M > 0$.}
\label{fig: Example 4}
\end{figure}
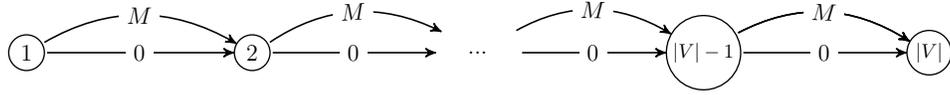

\looseness-1To summarize the discussion of this section, we conclude that in general both greedy and arc-disjoint evasion solutions can be suboptimal. Moreover, an optimal evasion solution may be somewhat non-trivial even for $T = 2$ and $A_0 = \emptyset$ as long as $k$ is at least two (recall Proposition~\ref{corrollary2} for $k=1$), but also not too large (Remark~\ref{corrollary3}). However, the necessary conditions given by Theorem~\ref{proposition2} provide us an intuition that we can exploit in the design of a heuristic algorithm for the strategic evader, which we discuss next.%



%

\section{Heuristic Algorithm for the Strategic Evader}\label{sec:algorithm}

\looseness-1 In this section we propose a heuristic algorithm for the strategic evader. The key idea of the algorithm is motivated by theoretical observations provided in the previous section, namely, Theorem \ref{proposition2}, and is based on the two-step ``look-ahead'' concept. {However, in contrast to Section~\ref{sec:theory} we do not assume that $A_0=\emptyset$.}

\begin{algorithm}
\DontPrintSemicolon
\textbf{Input:} Directed graph $G=(N, A, C)$, the source and destination nodes for the evader, the maximum number of arcs that can be blocked by the interdictor $k$, index of the current time epoch $\tau$, set of arcs $A_{\tau - 1}$ that are known to the interdictor, heuristic parameters $\alpha \in (0,1]$ and $q\in\mathbb{Z}_>0$\\
\textbf{Output:} Evasion decisions for the next one or two time epochs\\ 
$t \longleftarrow \tau, \quad I_{\tau} \longleftarrow \argmax \{z(G[A_\tau\setminus I]):\ |I| \leq k\}$\;
$\mathcal{P}^{SP}_\tau$ $\longleftarrow$ the shortest path in $G[A \setminus I_{\tau}]$\;
$A^{'}_\tau \longleftarrow A_{\tau - 1}\cup \mathcal{P}^{SP}_\tau$\;
$I^{'}_{\tau + 1} \longleftarrow \argmax \{z(G[A^{'}_\tau\setminus I]:\ |I| \leq k\}$\;
$\mathcal{P}^{SP}_{\tau + 1} \longleftarrow$ the shortest path in $G[A\setminus I^{'}_{\tau + 1}]$\;
$CurrentBestDecision \longleftarrow \{\mathcal{P}^{SP}_\tau\}$ \/\/ current (greedy) evasion decision\;
$CurrentBestSolution \longleftarrow \ell(\mathcal{P}^{SP}_{\tau}) + \ell(\mathcal{P}^{SP}_{\tau + 1})$\;
$\delta_\tau \longleftarrow \alpha\cdot \left(\ell(\mathcal{P}^{SP}_{\tau}) + \ell(\mathcal{P}^{SP}_{\tau + 1})\right)$ \/ a threshold that depends on the heuristic parameter~$\alpha$\;
\textbf{for} $(b^{(1)},\ldots,b^{(q)}) \in \mathcal{P}^{SP}_\tau$ \textbf{do}\\
\Begin
{
 $P^{SE}_\tau \longleftarrow$ the shortest path in $G[A\setminus (I_{\tau} \cup (\bigcup_{i = 1}^q b^{(i)})$]\;
 \textbf{if} $\ell(P^{SE}_\tau) < \delta_\tau$ and $P^{SE}_\tau \cap \mathcal{P}^{SP}_\tau \neq \emptyset $ \textbf{then}\\
 \Begin
 {
 $A^{''}_\tau \longleftarrow A_{\tau - 1}\cup P^{SE}_\tau$\;
 $I^{''}_{\tau + 1} \longleftarrow \argmax \{z(G[A^{''}_\tau\setminus I]:\ |I| \leq k\}$\;
 $P^{SE}_{\tau + 1} \longleftarrow$ the shortest path in $G[A\setminus I^{''}_{\tau + 1}]$\;
 \textbf{if} $\ell(P^{SE}_\tau) + \ell(P^{SE}_{\tau + 1}) < CurrentBestSolution$ \textbf{then}\\
 \Begin{
 $CurrentBestDecision \longleftarrow \{P^{SE}_\tau, P^{SE}_{\tau + 1}\}$\; $CurrentBestSolution \longleftarrow \ell(P^{SE}_\tau) + \ell(P^{SE}_{\tau + 1})$\;
 }
 }
}
\Return{$CurrentBestDecision$}
\caption{{Heuristic Algorithm for the Strategic Evader}}
\label{alg1}
\end{algorithm}

In the algorithm in each decision epoch the evader has two options: either she follows the greedy policy, or seeks for two alternative paths that can be traversed sequentially with the cumulative cost less than the loss obtained by the greedy approach. The pair of alternative paths is generated in a heuristic manner (see the details further in this section) by blocking a subset of arcs of some predefined cardinality of the shortest path in the network and then {verifying that conditions (\ref{eq2})-(\ref{eq3}) of Theorem \ref{proposition2} hold}. It is important to note that because of its two-step look-ahead scheme our algorithm can be applied in an iterative manner for any arbitrary time horizon~$T$.

We assume that the heuristic has access to an interdiction oracle that returns a set of $k$-most vital arcs in the network, currently observed by the interdictor. In general, the $k$-most vital arcs problem is known to be $NP$-hard \cite{12}. However, it can be rather effectively solved by decomposition algorithms; see, e.g., \cite{5}, that are typically faster than a naive branch-and-bound approach. Moreover, taking into account the structure of the problem, the decomposition algorithms can often be speeded up, if we store previous evasion decisions. Next, we discuss the key features of our algorithm in details and describe its links to Theorem \ref{proposition2}. We refer to Algorithm \ref{alg1} for the pseudocode.

\looseness-1 At the beginning of the $\tau$-th epoch the evader observes the information collected by the interdictor up to time $t = \tau$, i.e., set $A_{\tau - 1}$ and the corresponding blocking decision $I_{\tau}$. First, in lines $6 - 9$ of Algorithm \ref{alg1} we compute the evader's loss if she follows the greedy policy. (Note that we use $\mathcal{P}^{SP}_\tau$ and $\mathcal{P}^{SP}_{\tau + 1}$ to denote these paths as they do not necessarily coincide with ${P}^{SP}_\tau$ and ${P}^{SP}_{\tau+1}$ due to the iterative nature of the heuristic.)

\looseness-1 Then in lines $13-24$ we seek for a pair of alternative paths that can provide a better decision for the evader, e.g., if $T = 2$ and $A_0 = \emptyset$, then our construction ensures that these paths satisfy conditions (\ref{eq2})-(\ref{eq3}) of Theorem \ref{proposition2}. However, recall that Theorem \ref{proposition2} holds for $A_0 = \emptyset$ and does not take into consideration problems with $A_0 \neq \emptyset$. The latter is a more challenging case as the properties of alternative paths depend on the set of initial information; see, e.g., Example~\ref{Example 2}. Therefore, we resort to generating these paths in a heuristic manner as detailed below.

Specifically, to construct the first alternative path we first block $q\in\mathbb{Z}_{>0}$ arcs of $\mathcal{P}^{SP}_\tau$ together with set $I_{\tau}$ and identify the shortest path in the interdicted network, which is assumed to be the first alternative path; see line $13$ of Algorithm \ref{alg1}. As we block $q$ arcs of the shortest path, $\mathcal{P}^{SP}_\tau$, the newly constructed path, referred to as $\mathcal{P}^{SE}_\tau$, does not coincide with $\mathcal{P}^{SP}_\tau$.

\looseness-1 In line $14$ we verify, first, whether $\mathcal{P}^{SE}_\tau$ contains some arcs in common with $\mathcal{P}^{SP}_\tau$ and, furthermore, whether its loss is sufficiently small. {The former requirement is motivated by condition~(\ref{eq2}) of Theorem \ref{proposition2}}, while the latter requirement is controlled via the heuristic parameter $\alpha$ and is linked to the evader's loss under the greedy policy; see line $14$ of Algorithm \ref{alg1} and our additional comments below. If both of these requirements are satisfied, then in line $18$ we generate the second alternative path, $\mathcal{P}^{SE}_{\tau+1}$, in a greedy manner by assuming that $\mathcal{P}^{SE}_\tau$ is traversed at $t = \tau$. In line 19 we check whether the constructed solution is better than the best available solution, which coincides with a greedy one at the first iteration; see lines 8 and 9. Then we repeat the overall procedure in an iterative manner by blocking another subset of $q$ arcs of $\mathcal{P}^{SP}_\tau$.

\looseness-1 {As briefly outlined in the discussion above the developed heuristic has two tunable parameters: $\alpha$ and $q$. The first one, $\alpha \in (0,1]$, is used to compute $\delta_\tau$, which can be viewed as some measure of difference between the instantaneous losses of the greedy evasion paths in epochs $\tau$ and $\tau+1$ and a potential evasion path for the strategic evader in epoch $\tau$. Note that if $\alpha=0.5$, then $\delta_\tau$ is simply the average instantaneous loss of the greedy evader over two decision epochs.}

\looseness-1{In particular, for smaller values of parameter $\alpha$ the algorithm takes into consideration for the strategic evader only very promising evasion paths in epoch $\tau$, see the first condition in line 14. On the other hand, for larger values of $\alpha$ the algorithm considers a larger subset of candidate evasion paths that are also required to intersect with the shortest path in the interdicted network, see the second condition in line 14 and recall (\ref{eq2}). Thus, by increasing the value of $\alpha$ we increase the running time of the heuristic, but potentially improve the quality of the obtained myopic solution for the two-steps look-ahead approach. Also, note that improvements over two decision epochs do not necessarily imply improvements over longer decision-making horizons.}

\looseness-1 Furthermore, we use parameter $q$ to characterize the number of arcs of $\mathcal{P}^{SP}_\tau$
that are blocked in addition to $I_\tau$ in order to ensure that the first of the newly constructed alternative paths does not coincide with $\mathcal{P}^{SP}_\tau$. The choice of $q$ is related with the following trade-off. {Consider the case of $T = 2$ and $A_0 = \emptyset$. Note that Theorem \ref{proposition2}, see (\ref{eq2}), implies that the alternative evasion paths have some arcs in common with the shortest path (i.e., the initial evasion path of the greedy evader when $A_0 = \emptyset$). Intuitively, by increasing the value of $q$ we attempt to minimize the cardinality of these subsets of arcs in common (and thus, generate a path that is sufficiently different from the greedy evasion path in epoch $\tau$), see line 13. On the other hand, the loss of the first alternative path is required to be less than threshold $\delta_\tau$ (see line 14 of Algorithm~\ref{alg1}), which, in turn, requires a sufficiently small value of $q$.}

In summary, we observe that our heuristic approach finds optimal solutions for all the considered examples, namely, Examples \ref{Example 1}, \ref{Example 2} and \ref{Example 4}, under a suitable choice of the heuristic parameters, see additional discussion on the parameter settings in Section~\ref{subsec:prelim}. Furthermore, if $q$ is bounded by a constant, then Algorithm \ref{alg1} requires a polynomial in $|A|$ number of calls to the interdiction oracle.

\section{Computational Study}\label{sec:experiment}

In this section we compare the heuristic for the strategic evader (Algorithm \ref{alg1}) against the greedy policy. We show that the heuristic approach performs sufficiently well even for rather large values of $T$ (recall that Algorithm \ref{alg1} is myopic as it generates evasion decision only for at most two time epochs) and consistently outperforms the greedy policy on several classes of synthetic network instances. {In addition to the perfect feedback scenario, we also consider a noisy feedback scenario where for each arc initially unknown to the interdictor but traversed by the evader at decision epoch $t$, i.e., $a \in P_t \setminus A_{t - 1}$, the interdictor does not obtain the perfect information about the actual arc cost but rather observes a noisy realization of its nominal cost. In other words, we relax Assumption \textbf{A5} to reflect more realistic interdiction scenarios.}

The remainder of this section is organized as follows. First, in Section~\ref{subsec:prelim} we describe our test instances and the implementation issues. Then in Section~\ref{subsec:results} we discuss the obtained computational results.

\subsection{Preliminaries}\label{subsec:prelim}

\textbf{Graph structure and costs.} The test instances used in our experiments are represented by three classes of random graphs, specifically, layered, uniform~\cite{26} and Barabasi-Albert~(BA)~\cite{barabasi1999emergence} graphs. The test instances are constructed as follows:
\begin{itemize}
\item
\textit{Layered graphs.} Each of our random layered graphs consists of $h$ layers with $r_i$ nodes in the $i$-th layer, where
$r_i$ for $i\in\{2,\ldots,h-1\}$ is generated according to the discrete uniform distribution in the interval $[r_{min}, r_{max}]$. The first and last layers consist of a single node, i.e., $r_1=r_h=1$ , that are the source and destination nodes, respectively. An arc between a pair of nodes from the $i$-th and $j$-th layers is generated with probability $\frac{p}{j-i}$. Furthermore, the source node is connected by a directed arc to all nodes in layer $2$, while all nodes in layer $h-1$ are connected to the destination node. All arc costs are generated according to the discrete uniform distribution in the interval $[0,100|j-i|]$. In our experiments we set $h = 10$, $r_{min} = 4$, $r_{max} = 6$ and $p = 0.5$.
\item
\textit{Uniform graphs.} For each two nodes $i$ and $j$ in the graph, directed arc $(i,j)$ exists with probability $p$; see further details on undirected random graphs of this type in~\cite{26}. All arc costs are generated according to the discrete uniform distribution in the interval $[0,100]$. The source and destination nodes are selected as follows. First, given graph $G$ we compute its diameter, $diam(G)$, and construct a list of all node pairs such that the distance between the nodes is approximately $diam(G)/2$. Then a pair of nodes is chosen from this list uniformly at random to be the source and destination nodes. In our experiments we set $|N| = 50$ and $p = 0.5$.


\item
\textit{BA graphs.} These graphs are constructed based on the following preferential attachment mechanism. Suppose $m_0$ is a number of nodes in the initial complete graph. Then in each iteration we add a node and connect it with $m \leq m_0$ existing nodes randomly with probabilities proportional to their degrees; see further details in~\cite{barabasi1999emergence}. We set $|N| = 50$, $m = m_0 = 5$. The arc costs as well as the source and destination nodes are generated in the same manner as for the uniform random graphs described above.
\end{itemize}

\looseness-1\textbf{Initial information for the interdictor.} {In all our experiments we construct $A_0$ as follows. For each arc $a \in A$ we modify its cost to $c_a+M$ with probability $0.5$, where $M=10^4$ is sufficiently large. Thus, the arcs with modified costs do not belong to the shortest path from $s$ to $f$ in the modified graph as long as there exists at least one path that consists of only non-modified arcs. Then we add all arcs in the resulting shortest path to $A_0$. This procedure is repeated $5$ times and hence, $A_0$ consists of arcs that belong to at most $5$ distinct paths from $s$ to $f$.} 

\textbf{Feedback types.} As outlined above we consider two types of information feedback from the evader's action to the interdictor, namely, perfect and noisy feedback. The perfect feedback satisfies Assumption \textbf{A5}. The noisy information feedback is generated as follows. Recall that for each arc $a \in A_0$ the interdictor is assumed to know its nominal cost. Then whenever arc $a \in A \setminus A_0$ is traversed by the evader for the first time instead of $c_a$ the interdictor observes a cost generated according to a uniform distribution from $[c_a-0.2c_a,c_a+0.2c_a]$.

\textbf{Computational settings.} All experiments were performed on a PC with CPU i5-7200U and RAM 6 GB. We use the solution approach from \cite{5} to solve the $k$-most vital arcs problem. The heuristic for the strategic evader (Algorithm \ref{alg1}) is implemented in Java with CPLEX 12.7.1. Furthermore, we set $\alpha=0.5$ and $q=2$ as the heuristic parameters. The intuition behind these settings can be justified as follows.

\looseness-1 Having $\alpha=0.5$ implies that the value of $\delta_\tau$ in decision epoch $\tau$, see line 10 of Algorithm 1, is equal to the average loss of the greedy evader over two consecutive decision epochs, namely, $\tau$ and $\tau+1$. Thus, for the strategic evader, whose goal is to outperform the greedy evader over these two decision epochs, we consider as candidate paths in epoch $\tau$, see line 14, only paths of sufficiently small cost (less than threshold $\delta_\tau$). Following our discussion in Section \ref{sec:algorithm}, if $\alpha < 0.5$, then the algorithm may fail to take into consideration some alternative reasonably good evasion solutions. On the other hand, if $\alpha > 0.5$, then in epoch $\tau$ the heuristic is forced to consider evasion paths with sufficiently large costs. It is reasonable to expect that in most cases these candidate evasion paths in epoch $\tau$ lead to inferior evasion solutions over epochs $\tau$ and $\tau+1$.

Finally, as discussed in Section~\ref{sec:algorithm} the value of $q$ allows the heuristic to control indirectly that the evasion decisions by the strategic evader are sufficiently different from those generated by the greedy evader. From our preliminary experiments not reported here, where we explored sensitivity of Algorithm~\ref{alg1} with respect to the parameter settings, we conclude that Algorithm~\ref{alg1} with $\alpha=0.5$ and $q = 2$ performs reasonably well for our test instances and provides consistent computational results.


\subsection{Results and Discussion}\label{subsec:results}

\textbf{Measures of performance.}  Henceforth, we fix a particular class of random graphs as well as $k \in \{1,\ldots,10\}$ and $T \in \{2,5,10\}$. Then in each \textit{experiment} $Q = 50$ test instances, i.e., random graphs of the specified type, are generated.  We denote by $\chi_{=}(T, k)$ the percentage of test instances  in a particular experiment, where the performance of the greedy policy and the proposed heuristic approach coincide, i.e., the evader's losses obtained by these methods are the same. Also, denote by $\chi_{<}(T, k)$ the percentage of test instances, where the heuristic outperforms the greedy policy. Note that the percentage of test instances, where the heuristic approach is outperformed by the greedy policy is given by:
\begin{equation} \label{percent}
\chi_{>}(T,k) = 100 - \chi_{<}(T,k) - \chi_{=}(T,k).\nonumber
\end{equation}
We use $\chi_{=}$, $\chi_{<}$ and $\chi_{>}$ as the performance measures of the proposed heuristic in Tables~\ref{tab1}-\ref{tab6}.

 Specifically, for each class of random graphs described in Section~\ref{subsec:prelim}, $k \in \{1,\ldots,10\}$ and $T \in \{2,5,10\}$, the experiment is repeated $10$ times. Then in Tables~\ref{tab1}, \ref{tab3} and \ref{tab5} we report average values and standard deviations with respect to $\chi_{<}(T,k)$, $\chi_{=}(T,k)$ and $\chi_{>}(T,k)$.  In Tables~\ref{tab2}, \ref{tab4} and \ref{tab6} the same computational results are provided assuming that the feedback is noisy. Furthermore, in Table \ref{tab7} we report the average running times of the heuristic algorithm for each $k \in \{1,\ldots,10\}$ and $T = 10$.

\begin{table}
\begin{scriptsize}
\begin{center}
\begin{tabular}{|c |ccc |ccc |ccc |}
\hline
& \multicolumn{3}{c|}{$T = 2$} & \multicolumn{3}{c|}{$T =5$} & \multicolumn{3}{c|}{$T =10$} \\ \cline{1-10}
$k$ & $\chi_{<}$ & $\chi_{=}$ & $\chi_{>}$ & $\chi_{<}$ & $\chi_{=}$ & $\chi_{>}$ & $\chi_{<}$ & $\chi_{=}$ & $\chi_{>}$\\ \hline
1 & 4.0 (3.3) & 96.0 (3.3) & 0.0 (0.0) & 5.2 (3.8) & 94.4 (4.0) & 0.4 (0.8) & 5.2 (3.8) & 94.4 (4.0) & 0.4 (0.8) \\
2 & 7.0 (5.0) & 93.0 (5.0) & 0.0 (0.0) & 9.2 (3.6) & 87.6 (6.1) & 3.2 (2.9) & 9.8 (4.1) & 86.0 (6.4) & 4.2 (3.2) \\
3 & 9.2 (4.1) & 90.8 (4.1) & 0.0 (0.0) & 18.2 (4.1) & 76.4 (3.8) & 5.4 (3.2) & 24.2 (8.8) & 69.6 (7.7) & 6.2 (3.4) \\
4 & 10.6 (6.2) & 89.4 (6.2) & 0.0 (0.0) & 23.0 (5.2) & 68.8 (7.0) & 8.2 (5.0) & 31.4 (6.3) & 53.2 (6.8) & 15.4 (6.8) \\ 5 & 14.2 (5.2) & 85.8 (5.2) & 0.0 (0.0) & 32.2 (8.8) & 57.8 (9.8) & 10.0 (2.0) & 44.2 (4.3) & 34.0 (5.7) & 21.8 (2.6) \\
6 & 15.0 (3.6) & 85.0 (3.6) & 0.0 (0.0) & 37.6 (7.1) & 50.8 (8.1) & 11.6 (3.9) & 45.0 (6.7) & 28.0 (5.8) & 27.0 (7.1) \\ 7 & 15.0 (5.8) & 85.0 (5.8) & 0.0 (0.0) & 35.4 (6.3) & 53.7 (5.5) & 10.9 (6.0) & 50.7 (5.2) & 26.5 (7.8) & 22.8 (6.1) \\ 8 & 14.2 (3.9) & 85.8 (3.9) & 0.0 (0.0) & 36.6 (8.0) & 50.4 (7.5) & 13.0 (6.3) & 48.6 (4.4) & 24.8 (6.2) & 26.6 (6.2) \\ 9 & 16.0 (4.0) & 84.0 (4.0) & 0.0 (0.0) & 42.4 (6.2) & 47.4 (5.7) & 10.2 (4.9) & 57.1 (5.0) & 17.3 (3.7) & 25.5 (6.2) \\ 10 & 20.0 (3.1) & 80.0 (3.1) & 0.0 (0.0) & 46.4 (6.2) & 42.8 (4.7) & 10.8 (3.6) & 61.8 (5.4) & 14.3 (4.7) & 23.9 (5.6) \\\hline
\end{tabular}
\caption{\footnotesize Comparison of the greedy policy against the heuristic for the strategic evader in the layered random graphs. The interdictor's feedback is transparent.  For each pair of values of $k$ and $T$, we report the averages and standard deviations of $\chi_{<}(T,k)$, $\chi_{=}(T,k)$ and $\chi_{>}(T,k)$ over 10 experiments. 
}
\label{tab1}
\end{center}
\end{scriptsize}
\end{table}

\begin{table}
\begin{scriptsize}
\begin{center}
\begin{tabular}{|c |ccc |ccc |ccc |}
\hline
& \multicolumn{3}{c|}{$T = 2$} & \multicolumn{3}{c|}{$T =5$} & \multicolumn{3}{c|}{$T =10$} \\ \cline{1-10}
$k$ & $\chi_{<}$ & $\chi_{=}$ & $\chi_{>}$ & $\chi_{<}$ & $\chi_{=}$ & $\chi_{>}$ & $\chi_{<}$ & $\chi_{=}$ & $\chi_{>}$\\ \hline
1 & 4.0 (2.2) & 96.0 (2.2) & 0.0 (0.0) & 10.0 (3.2) & 88.6 (3.0) & 1.4 (0.9) & 9.0 (3.6) & 88.6 (3.0) & 2.4 (1.5) \\
2 & 10.2 (3.3) & 89.8 (3.3) & 0.0 (0.0) & 21.4 (5.7) & 72.2 (7.2) & 6.4 (4.1) & 26.2 (5.4) & 67.4 (6.1) & 6.4 (3.4) \\
3 & 10.4 (2.8) & 89.6 (2.8) & 0.0 (0.0) & 27.2 (5.5) & 66.8 (6.1) & 6.0 (3.0) & 37.6 (5.5) & 51.4 (5.1) & 11.0 (3.9) \\
4 & 10.2 (4.7) & 89.8 (4.7) & 0.0 (0.0) & 30.6 (5.4) & 60.8 (6.9) & 8.6 (4.2) & 49.6 (5.1) & 36.2 (7.0) & 14.2 (4.4) \\
5 & 13.2 (4.5) & 86.8 (4.5) & 0.0 (0.0) & 32.5 (7.7) & 58.1 (7.7) & 9.4 (3.0) & 48.1 (7.9) & 28.8 (4.8) & 23.1 (6.0) \\
6 & 17.0 (5.0) & 83.0 (5.0) & 0.0 (0.0) & 37.2 (5.0) & 52.2 (5.5) & 10.6 (6.0) & 53.9 (4.6) & 21.2 (6.5) & 24.9 (4.4) \\
7 & 16.8 (5.2) & 83.2 (5.2) & 0.0 (0.0) & 42.8 (5.7) & 47.0 (7.4) & 10.2 (3.0) & 58.4 (4.8) & 19.5 (5.8) & 22.0 (3.4) \\
8 & 18.2 (6.8) & 81.8 (6.8) & 0.0 (0.0) & 42.4 (6.5) & 48.8 (8.8) & 8.8 (3.0) & 57.9 (7.4) & 19.2 (6.7) & 22.9 (4.8) \\
9 & 15.2 (3.7) & 84.8 (3.7) & 0.0 (0.0) & 42.2 (5.6) & 45.8 (6.3) & 12.0 (4.7) & 61.4 (5.0) & 16.6 (6.1) & 22.0 (6.0) \\
10 & 17.0 (4.6) & 83.0 (4.6) & 0.0 (0.0) & 44.8 (5.9) & 43.2 (5.5) & 12.0 (4.9) & 59.5 (4.2) & 16.3 (5.1) & 24.3 (6.1) \\
\hline
\end{tabular}
\caption{\footnotesize {Comparison of the greedy policy against the heuristic for the strategic evader in the layered random graphs. The interdictor's feedback is noisy.  For each pair of values of $k$ and $T$, we report the averages and standard deviations of $\chi_{<}(T,k)$, $\chi_{=}(T,k)$ and $\chi_{>}(T,k)$ over 10 experiments. }
}
\label{tab2}
\end{center}
\end{scriptsize}
\end{table}

Finally, in the remainder of the section, whenever we refer to the strategic evader or the heuristic we imply that the evader uses Algorithm~\ref{alg1} in the iterative manner over the specified time horizon $T$ (recall that Algorithm 1 computes an evasion decision only for at most two decision epochs).

\textbf{Heuristic performance vs.~Greedy policy.} First, consider the case of $T = 2$ that corresponds to the first column in each of the tables. We observe that $\chi_{>}(2,k)=0$ for all $k \in \{1,\ldots,10\}$ in all tables. In other words, the strategic evader is always at least as good as the greedy evader for $T=2$. This heuristic performance is expected given that, first, Algorithm \ref{alg1} compares possible alternative solutions with a solution of the greedy policy for two steps ahead; furthermore, whenever the heuristic cannot find a good quality alternative solution, then the heuristic resorts to the greedy policy solution.

\begin{table}
\begin{scriptsize}
\begin{center}
\begin{tabular}{|c |ccc |ccc |ccc |}
\hline
& \multicolumn{3}{c|}{$T = 2$} & \multicolumn{3}{c|}{$T =5$} & \multicolumn{3}{c|}{$T =10$} \\ \cline{1-10}
$k$ & $\chi_{<}$ & $\chi_{=}$ & $\chi_{>}$ & $\chi_{<}$ & $\chi_{=}$ & $\chi_{>}$ & $\chi_{<}$ & $\chi_{=}$ & $\chi_{>}$\\ \hline
1 & 0.8 (1.3) & 99.2 (1.3) & 0.0 (0.0) & 1.2 (1.3) & 98.8 (1.3) & 0.0 (0.0) & 1.2 (1.3) & 98.8 (1.3) & 0.0 (0.0) \\
2 & 4.8 (2.2) & 95.2 (2.2) & 0.0 (0.0) & 6.6 (3.5) & 92.0 (2.4) & 1.4 (2.4) & 6.8 (3.7) & 91.8 (2.6) & 1.4 (2.4) \\
3 & 4.2 (2.6) & 95.8 (2.6) & 0.0 (0.0) & 10.0 (4.5) & 88.0 (5.1) & 2.0 (1.8) & 12.6 (6.0) & 84.6 (6.1) & 2.8 (2.2) \\
4 & 6.2 (1.9) & 93.8 (1.9) & 0.0 (0.0) & 16.6 (6.1) & 78.0 (7.4) & 5.4 (2.5) & 20.2 (4.9) & 71.2 (6.3) & 8.6 (3.7) \\
5 & 6.6 (3.8) & 93.4 (3.8) & 0.0 (0.0) & 16.8 (6.7) & 76.2 (5.5) & 7.0 (3.3) & 24.8 (5.6) & 61.4 (5.3) & 13.8 (2.3) \\
6 & 7.2 (4.2) & 92.8 (4.2) & 0.0 (0.0) & 21.2 (6.7) & 72.4 (7.8) & 6.4 (4.6) & 32.6 (4.9) & 50.2 (5.1) & 17.2 (6.8) \\
7 & 7.0 (2.7) & 93.0 (2.7) & 0.0 (0.0) & 23.2 (6.5) & 70.0 (7.0) & 6.8 (3.2) & 33.8 (6.7) & 46.2 (8.5) & 20.0 (4.3) \\
8 & 7.0 (4.4) & 93.0 (4.4) & 0.0 (0.0) & 25.6 (5.6) & 67.2 (4.9) & 7.2 (3.6) & 37.0 (6.1) & 42.0 (6.4) & 21.0 (3.0) \\
9 & 8.2 (4.3) & 91.8 (4.3) & 0.0 (0.0) & 27.6 (7.0) & 62.8 (5.3) & 9.6 (4.5) & 41.2 (3.6) & 40.0 (5.7) & 18.8 (5.7) \\
10 & 10.8 (3.9) & 89.2 (3.9) & 0.0 (0.0) & 25.8 (8.4) & 67.4 (5.9) & 6.8 (5.0) & 41.6 (6.4) & 40.2 (5.7) & 18.2 (3.6) \\
\hline
\end{tabular}
\caption{\footnotesize Comparison of the greedy policy against the heuristic for the strategic evader in the uniform random graphs. The interdictor's feedback is transparent.  For each pair of values of $k$ and $T$,  we report the averages and standard deviations of $\chi_{<}(T,k)$, $\chi_{=}(T,k)$ and $\chi_{>}(T,k)$ over 10 experiments. 
}
\label{tab3}
\end{center}
\end{scriptsize}
\end{table}

\begin{table}
\begin{scriptsize}
\begin{center}
\begin{tabular}{|c |ccc |ccc |ccc |}
\hline
& \multicolumn{3}{c|}{$T = 2$} & \multicolumn{3}{c|}{$T =5$} & \multicolumn{3}{c|}{$T =10$} \\ \cline{1-10}
$k$ & $\chi_{<}$ & $\chi_{=}$ & $\chi_{>}$ & $\chi_{<}$ & $\chi_{=}$ & $\chi_{>}$ & $\chi_{<}$ & $\chi_{=}$ & $\chi_{>}$\\ \hline
1 & 3.0 (2.0) & 97.0 (2.0) & 0.0 (0.0) & 4.2 (2.6) & 95.8 (2.6) & 0.0 (0.0) & 4.2 (2.6) & 95.8 (2.6) & 0.0 (0.0) \\
2 & 3.8 (2.3) & 96.2 (2.3) & 0.0 (0.0) & 8.2 (1.4) & 89.8 (1.9) & 2.0 (1.8) & 9.2 (1.3) & 89.0 (2.0) & 1.8 (1.7) \\
3 & 7.0 (3.5) & 93.0 (3.5) & 0.0 (0.0) & 18.2 (4.2) & 78.0 (3.3) & 3.8 (2.9) & 20.2 (4.2) & 72.8 (3.9) & 7.0 (3.3) \\
4 & 7.0 (3.3) & 93.0 (3.3) & 0.0 (0.0) & 19.4 (5.4) & 75.8 (6.0) & 4.8 (2.7) & 28.8 (3.0) & 59.2 (5.3) & 12.0 (4.5) \\
5 & 8.2 (3.7) & 91.8 (3.7) & 0.0 (0.0) & 22.4 (4.3) & 72.6 (5.6) & 5.0 (3.9) & 33.6 (5.4) & 51.8 (5.0) & 14.6 (4.6) \\
6 & 9.0 (2.9) & 91.0 (2.9) & 0.0 (0.0) & 22.8 (2.6) & 69.2 (4.4) & 8.0 (4.0) & 36.4 (6.8) & 45.6 (7.5) & 18.0 (5.0) \\
7 & 7.4 (4.6) & 92.6 (4.6) & 0.0 (0.0) & 21.8 (4.2) & 71.0 (5.0) & 7.2 (2.4) & 33.4 (3.5) & 46.4 (6.2) & 20.2 (6.1) \\
8 & 11.2 (4.7) & 88.8 (4.7) & 0.0 (0.0) & 27.0 (5.6) & 65.4 (5.9) & 7.6 (2.2) & 42.2 (5.2) & 38.2 (4.2) & 19.6 (5.9) \\
9 & 10.6 (4.4) & 89.4 (4.4) & 0.0 (0.0) & 28.0 (6.2) & 64.6 (4.4) & 7.4 (3.9) & 45.6 (5.4) & 35.8 (7.0) & 18.6 (3.0) \\
10 & 11.4 (2.7) & 88.6 (2.7) & 0.0 (0.0) & 27.8 (4.4) & 63.2 (4.7) & 9.0 (2.2) & 41.8 (10.5) & 38.2 (10.0) & 20.0 (5.9) \\

\hline
\end{tabular}
\caption{\footnotesize {Comparison of the greedy policy against the heuristic for the strategic evader in the uniform random graphs. The interdictor's feedback is noisy.  For each pair of values of $k$ and $T$,  we report the averages and standard deviations of $\chi_{<}(T,k)$, $\chi_{=}(T,k)$ and $\chi_{>}(T,k)$ over 10 experiments. }
}
\label{tab4}
\end{center}
\end{scriptsize}
\end{table}

\looseness-1{Next, it is worth highlighting that the values of both $\chi_{>}(T,1)$ and $\chi_{<}(T,1)$ are rather small for all test instances, i.e., the strategic evader almost always selects a greedy evasion solution when the interdictor blocks exactly one arc. This observation can be explained by Proposition~\ref{corrollary2}, where we show that for $T=2$, $A_0=\emptyset$ and $k=1$ the greedy policy is optimal (recall that in our test instances $A_0\neq \emptyset$). Thus, we can expect that greedy evasion solutions are often close to optimal for $k=1$ whenever $|A_0|$ is sufficiently small. Also, the aforementioned trend is less pronounced for instances with noisy feedback (if one compares the results for $\chi_{>}(T,1)$ in Tables~\ref{tab1}, \ref{tab3} and \ref{tab5} against those in Tables~\ref{tab2}, \ref{tab4} and \ref{tab6} , respectively), which is also reasonable to expect.}

For a fixed value of $k$, the value of $\chi_{<}(T,k)$ tends to grow with the increase in $T$. In other words, the number of instances, where the heuristic outperforms the greedy policy increases. This observation is intuitive given that as the number of decision epochs increases there are more opportunities for the strategic evader to improve her performance.

On the other hand, we observe that the value of $\chi_{>}(T,k)$ also increases with $T$. For example, for $k = 4$ in Table \ref{tab1}, $\chi_{>}(T,4)$ monotonously increases from $0$ to $15.4$. It implies that as $T$ increases there exist instances, where the greedy policy outperforms the proposed heuristic. This fact is also not surprising if one recalls that the heuristic is myopic as it computes evasion decisions only for at most two decision epochs by applying Algorithm~\ref{alg1} in the iterative manner. Nevertheless, in all cases we have that $\chi_{<}(T,k) > \chi_{>}(T,k)$ on average, which implies that the evader should prefer using the heuristic than the greedy policy. Note that $\chi_{=}(T, k)$~(i.e., the number of instances where the performances of the approaches coincide) decreases as $T$ increases, which is natural given the above discussion.

\begin{table}
\begin{scriptsize}
\begin{center}
\begin{tabular}{|c |ccc |ccc |ccc |}
\hline
& \multicolumn{3}{c|}{$T = 2$} & \multicolumn{3}{c|}{$T =5$} & \multicolumn{3}{c|}{$T =10$} \\ \cline{1-10}
$k$ & $\chi_{<}$ & $\chi_{=}$ & $\chi_{>}$ & $\chi_{<}$ & $\chi_{=}$ & $\chi_{>}$ & $\chi_{<}$ & $\chi_{=}$ & $\chi_{>}$\\ \hline
1 & 2.4 (2.2) & 97.6 (2.2) & 0.0 (0.0) & 3.2 (2.4) & 96.2 (2.7) & 0.6 (1.8) & 3.2 (2.4) & 96.2 (2.7) & 0.6 (1.8) \\
2 & 6.2 (3.8) & 93.8 (3.8) & 0.0 (0.0) & 13.0 (6.9) & 85.0 (7.1) & 2.0 (2.5) & 13.0 (6.8) & 84.8 (6.8) & 2.2 (2.6) \\
3 & 9.4 (4.0) & 90.6 (4.0) & 0.0 (0.0) & 21.2 (6.1) & 73.8 (5.3) & 5.0 (3.1) & 26.2 (6.5) & 67.8 (7.3) & 6.0 (2.5) \\
4 & 15.0 (4.2) & 85.0 (4.2) & 0.0 (0.0) & 29.2 (6.7) & 59.6 (6.4) & 11.2 (4.2) & 38.4 (5.6) & 43.0 (6.6) & 18.6 (5.3) \\
5 & 17.2 (3.2) & 82.8 (3.2) & 0.0 (0.0) & 34.6 (7.6) & 53.9 (7.7) & 11.6 (4.1) & 40.0 (12.0) & 37.6 (13.6) & 22.4 (10.6) \\
6 & 16.2 (5.2) & 83.8 (5.2) & 0.0 (0.0) & 44.1 (6.0) & 47.0 (6.0) & 8.9 (3.0) & 43.1 (7.7) & 29.7 (9.5) & 27.2 (10.4) \\
7 & 14.8 (4.2) & 85.2 (4.2) & 0.0 (0.0) & 37.5 (8.1) & 48.5 (7.4) & 14.1 (5.0) & 52.3 (10.6) & 24.6 (7.7) & 23.1 (10.8) \\
8 & 17.2 (3.5) & 82.8 (3.5) & 0.0 (0.0) & 47.5 (6.6) & 40.3 (5.8) & 12.2 (4.9) & 56.0 (9.1) & 23.4 (8.1) & 20.7 (9.8) \\
9 & 19.2 (4.9) & 80.8 (4.9) & 0.0 (0.0) & 44.4 (6.2) & 41.1 (4.2) & 14.6 (3.9) & 52.1 (14.4) & 22.8 (7.2) & 25.1 (12.5) \\
10 & 15.8 (4.3) & 84.2 (4.3) & 0.0 (0.0) & 39.7 (7.1) & 47.3 (6.7) & 13.0 (2.4) & 57.8 (19.4) & 19.7 (14.0) & 22.4 (8.9) \\
\hline
\end{tabular}
\caption{\footnotesize Comparison of the greedy policy against the heuristic for the strategic evader in the BA random graphs. The interdictor's feedback is transparent.  For each pair of values of $k$ and $T$,  we report the averages and standard deviations of $\chi_{<}(T,k)$, $\chi_{=}(T,k)$ and $\chi_{>}(T,k)$ over 10 experiments. 
}
\label{tab5}
\end{center}
\end{scriptsize}
\end{table}

\begin{table}
\begin{scriptsize}
\begin{center}
\begin{tabular}{|c |ccc |ccc |ccc |}
\hline
& \multicolumn{3}{c|}{$T = 2$} & \multicolumn{3}{c|}{$T =5$} & \multicolumn{3}{c|}{$T =10$} \\ \cline{1-10}
$k$ & $\chi_{<}$ & $\chi_{=}$ & $\chi_{>}$ & $\chi_{<}$ & $\chi_{=}$ & $\chi_{>}$ & $\chi_{<}$ & $\chi_{=}$ & $\chi_{>}$\\ \hline
1 & 3.2 (2.2) & 96.8 (2.2) & 0.0 (0.0) & 5.4 (2.7) & 94.0 (2.7) & 0.6 (0.9) & 5.4 (2.7) & 94.0 (2.7) & 0.6 (0.9) \\
2 & 7.0 (4.2) & 93.0 (4.2) & 0.0 (0.0) & 17.8 (6.3) & 79.2 (6.1) & 3.0 (1.8) & 17.6 (4.9) & 77.8 (6.1) & 4.6 (2.7) \\
3 & 9.2 (4.2) & 90.8 (4.2) & 0.0 (0.0) & 28.0 (4.1) & 66.8 (6.8) & 5.2 (4.8) & 35.8 (5.2) & 55.0 (7.4) & 9.2 (3.9) \\
4 & 12.6 (3.4) & 87.4 (3.4) & 0.0 (0.0) & 31.0 (10.2) & 59.8 (11.6) & 9.2 (5.1) & 46.8 (8.4) & 37.4 (10.3) & 15.8 (3.9) \\
5 & 16.2 (4.9) & 83.8 (4.9) & 0.0 (0.0) & 40.3 (4.8) & 50.5 (5.5) & 9.2 (2.7) & 48.2 (6.7) & 30.3 (5.9) & 21.5 (8.7) \\
6 & 16.6 (5.0) & 83.4 (5.0) & 0.0 (0.0) & 38.7 (8.9) & 49.2 (8.0) & 12.1 (4.4) & 50.6 (6.9) & 24.3 (8.2) & 25.1 (7.4) \\
7 & 16.2 (5.1) & 83.8 (5.1) & 0.0 (0.0) & 43.8 (3.2) & 45.2 (7.2) & 11.0 (4.5) & 56.0 (12.8) & 17.9 (7.0) & 26.1 (9.2) \\
8 & 18.2 (5.0) & 81.8 (5.0) & 0.0 (0.0) & 44.4 (7.0) & 43.9 (6.1) & 11.7 (5.0) & 52.1 (7.8) & 17.6 (7.9) & 30.3 (8.3) \\
9 & 20.8 (6.7) & 79.2 (6.7) & 0.0 (0.0) & 46.9 (5.7) & 39.8 (6.0) & 13.3 (2.3) & 52.3 (15.2) & 17.0 (8.3) & 30.7 (10.8) \\
10 & 17.6 (6.4) & 82.4 (6.4) & 0.0 (0.0) & 43.4 (6.6) & 44.3 (7.6) & 12.3 (4.4) & 54.4 (16.9) & 17.6 (13.0) & 28.0 (13.7) \\
\hline
\end{tabular}
\caption{\footnotesize{Comparison of the greedy policy against the heuristic for the strategic evader in the BA random graphs. The interdictor's feedback is noisy.  For each pair of values of $k$ and $T$, we report the averages and standard deviations of $\chi_{<}(T,k)$, $\chi_{=}(T,k)$ and $\chi_{>}(T,k)$ over 10 experiments. }}
\label{tab6}
\end{center}
\end{scriptsize}
\end{table}

Furthermore, observe that for fixed $T$ with the increase of $k$, the value of $\chi_{<}(T,k)$ tends to grow, while $\chi_{=}(T,k)$ tends to decrease. The latter observation can be explained as follows.  {As we point out in Section \ref{sec:theory}, if parameter $k$ is sufficiently large, then the optimal evasion solution is likely to be arc-disjoint (recall Remark~\ref{corrollary3}). However, an optimal pair of arc-disjoint paths usually does not coincide with the greedy evasion solution; recall Examples \ref{Example 1} and \ref{Example 2}.}

{Finally, when comparing the results Tables~\ref{tab1}, \ref{tab3} and \ref{tab5} against those in Tables~\ref{tab2}, \ref{tab4} and \ref{tab6} , respectively, we conclude that the above observations hold for both perfect and noisy feedback settings (recall that the latter relaxes Assumption~\textbf{A5} used in our derivations of the theoretical results in Section~\ref{sec:NP} and~\ref{sec:theory}). In other words, in scenarios where the interdictor does not have the perfect information about the evader's actions the heuristic performs sufficiently well.}


\begin{table}[t]
\begin{scriptsize}
\begin{center}
\begin{tabular}{|c |c |c | c|}
\hline
$k$ & Layered graphs & Uniform graphs & BA graphs \\
\hline
1 & 0.06 & 0.08 & 0.06 \\
2 & 0.2 & 0.21 & 0.2 \\
3 & 0.5 & 0.39 & 0.42 \\
4 & 0.97 & 0.61 & 0.72 \\
5 & 1.49 & 0.84 & 0.9 \\
6 & 1.72 & 0.81 & 0.87 \\
7 & 1.93 & 0.74 & 0.88 \\
8 & 2.07 & 0.72 & 0.82 \\
9 & 2.02 & 0.7 & 0.62 \\
10 & 1.52 & 0.67 & 0.53 \\
\hline
\end{tabular}
\caption{\footnotesize {The average running times (in seconds) of the heuristic algorithm for the strategic evader (Algorithm~\ref{alg1}) applied for $T = 10$ and $k \in \{1,\ldots,10\}$.}
}
\label{tab7}
\end{center}
\end{scriptsize}
\end{table}

\textbf{Running time.} Note that the running time of the heuristic algorithm initially increases and then tends to decrease with the increase of parameter $k$, see Table \ref{tab7}. Clearly, if $k$ is sufficiently small, then the $k$-most vital arcs problem can be solved fast and thus, the total running time of Algorithm \ref{alg1} decreases. Alternatively, the larger the value of parameter $k$ is, the more information the evader is forced to reveal to the interdictor in each decision epoch. Once the interdictor finds an optimal solution of the $k$-most vital arcs problem in the full information network, then the interdictor keeps implementing the same solution in each decision epoch, see~\cite{1} for more details. Thus, as the heuristic is myopic, then it also does not change the evasion solutions in the corresponding decision epochs. Hence, for sufficiently large values of parameter~$k$ most of the computational efforts of the heuristic algorithm are restricted to the first few decision epochs.

\textbf{Summary.} To conclude, the proposed heuristic approach for the strategic evader outperforms in general the greedy evasion policy. Moreover, despite its myopic two-step look ahead structure our approach can be successfully applied even for rather large values of parameter $T$ and any set of initial information available to the interdictor. Finally, we note that the obtained results are pretty consistent with respect to all considered classes of random graphs and types of the information feedback obtained by the interdictor from the evader's actions.

\section{Conclusions} \label{sec:conclusions}

\looseness-1In this study, we consider a class of sequential interdiction settings where the interdictor has incomplete initial information about the network while the evader has complete knowledge of the network including its structure and arc costs. However, by observing the evader's actions, the interdictor learns about the network structure and costs and thus, can adjust his actions in subsequent decision epochs.

\looseness-1 Our focus is on the evader's perspective. In particular, we assume that the interdictor acts in a greedy manner, by blocking $k$ arcs known to him in each round. Such interdiction policies are known from the literature, see~\cite{1,2}, to perform well against greedy evaders who traverse along the shortest path in the non-interdicted network in each round. In this paper, our goal is to explore whether the evader can improve her performance by acting in a strategic manner.

Specifically, we first show that the evader's problem is computationally hard even for two decision epochs. Then we derive basic constructive properties of optimal evasion policies for two decision epochs when the interdictor has no initial information about the network structure. \looseness-1Furthermore, based on these observations, we design a heuristic algorithm for a strategic evader in a general setting with an arbitrary time horizon and any initial information available to the interdictor. Our computational experiments demonstrate that the proposed heuristic consistently outperforms the greedy evasion policy on several classes of synthetic network instances with respect to various sets of initial information and types of the feedback available to the interdictor.

The insights (both from theoretical and practical perspectives) from our results can be summarized as follows. First, our heuristic algorithm is based on a two-step look-ahead idea. Thus, whenever the interdictor is myopic the evaders can significantly improve their performance over multiple decision epochs in a rather simple manner, i.e., by considering only two consecutive decision epochs instead of one in the greedy~(i.e., myopic) evasion approach. Furthermore, for the aforementioned two consecutive epochs it may be sufficient for the evader to consider two simple decision strategies. Either the evader should follow a myopic shortest-path based policy or seek only two alternative paths that have some arcs in common with greedy evasion paths in the network. Also, for the latter case the evasion paths in consequent decision epochs often need to be arc-disjoint, which is also rather intuitive from the real-life perspective (e.g., in ``infiltration'' or ``smuggling'' scenarios).

\looseness-1Another interesting observation (with possible practical implications) from our computational study is that strategic evasion decisions are more beneficial when the interdictor has more resources. In other words, the more arcs can be blocked by the interdictor in each decision epoch the faster he forces the greedy evader to reveal new network information, which subsequently can be exploited to improve the interdiction decisions. Thus, in such scenarios the evaders need to follow more sophisticated policies in order to improve their performance. Moreover, this observation also suggests that the availability of more interdiction resources do not necessarily imply that the interdictor may simply rely on myopic policies. From the practical perspective, the increase in the interdiction resources may force the evader to switch to strategic evasions, which in turn decreases the efficiency of myopic interdiction policies and consequently may require from the interdictor implementation of non-myopic decisions. On the other hand, for instances where the interdictor's resources are limited greedy evasions can be close to optimal, which implies that the evaders may simplify their decisions over multiple epochs. Then the interdictor may also consider relying on myopic policies as they perform well against greedy evaders.

With respect to future research directions, it would be interesting to study a setting, where the decision-makers operate in a stochastic environment, see, e.g., related works in \cite{16, 22} for the analysis of different types of feedback in a classical multi-armed bandit setting. Moreover, in our study we assume that the evader observes the blocking decision before choosing a path and knows the interdictor's budget. Therefore, one may explore the evader's problem with this assumption relaxed. For example, the study in \cite{sefair2016dynamic} considers a class of interdiction problems where at every node in the evader's path the interdictor must decide whether or not to expand the interdiction set.
%

\looseness-1 Finally, in this paper our major goal is to gain insights about the evasion policies that could be effective against interdictors who assume simplistic greedy evasion policies. In view of the discussion above, we envision that the results and observations of this study can be further exploited for the development of advanced interdiction models, where the evader may follow more sophisticated evasion policies than those that are typically assumed in the related literature.

\textbf{Acknowledgments.} The article was prepared within the framework of the Basic Research Program at the National Research University Higher School of Economics (HSE). The authors are thankful to the associate editor and three anonymous referees for their constructive comments that allowed us to greatly improve the paper.

\section*{References}

\bibliographystyle{abbrv}
\bibliography{bibliography}
\end{document}